\theoremstyle{plain} % default
\newtheorem{theorem}{Theorem}
\newtheorem{claim}[theorem]{Claim}
\theoremstyle{definition}
\newtheorem{definition}{Definition}
\newtheorem{note}{Note}
\theoremstyle{remark}
\newcommand{\ket}[1]{\ensuremath{\left\vert #1 \right\rangle}}
\newcommand{\mbf}[1]{\ensuremath{\mathbf{#1}}}
\newcommand{\mddots}{\mbf{\ddots{}}}
\newcommand{\bigO}[1]{\ensuremath{\mathcal{O}\left( #1 \right)}}
\newcommand{\circEqual}[2]{%
  \begin{tabular}{@{}c@{}c@{}c@{}}
    \begin{tabular}{c}
      #1
    \end{tabular}&
                   \begin{tabular}{c}
                     {\Large=}
                   \end{tabular}&
                                  \begin{tabular}{c}
                                    #2
                                  \end{tabular}%
  \end{tabular}%
}
\begin{document}

\title{Practical Quantum Computing: solving the wave equation using a quantum approach}
\author{Adrien Suau}
\email{adrien.suau@cerfacs.fr}
%\orcid{0000-0002-2412-7298}
\affiliation{%
  CERFACS, 42 Avenue Gaspard Coriolis, 31057 Toulouse, France
}
\affiliation{%
  LIRMM, University of Montpellier, 161 rue Ada, 34095 Montpellier, France
}
\author{Gabriel Staffelbach}
\affiliation{%
  CERFACS, 42 Avenue Gaspard Coriolis, 31057 Toulouse, France
}
\author{Henri Calandra}
\affiliation{%
  TOTAL SA, 2 Avenue de Vignancour, 64000 Pau, France
}
\date{\today}

\begin{abstract}
  In the last years, several quantum algorithms that try to address the problem of partial differential equation solving have been devised. On one side, ``direct'' quantum algorithms that aim at encoding the solution of the PDE by executing one large quantum circuit. On the other side, variational algorithms that approximate the solution of the PDE by executing several small quantum circuits and making profit of classical optimisers.
  In this work we propose an experimental study of the costs (in terms of gate number and execution time on a idealised hardware created from realistic gate data) associated with one of the ``direct'' quantum algorithm: the wave equation solver devised in [PCS\@. Costa, S\@. Jordan, A\@. Ostrander, \textit{Phys\@. Rev\@. A} 99, 012323, 2019]. We show that our implementation of the quantum wave equation solver agrees with the theoretical big-O complexity of the algorithm. We also explain in great details the implementation steps and discuss some possibilities of improvements.
  Finally, our implementation proves experimentally that some PDE can be solved on a quantum computer, even if the direct quantum algorithm chosen will require error-corrected quantum chips, which are not believed to be available in the short-term.  
\end{abstract}

\keywords{wave equation solver quantum computing algorithm}
\maketitle

\section{\label{sec:introduction}Introduction}

Quantum computing has drawn a lot of attention in the last few years, following the
successive announcements from several world-wide companies about the implementation of
quantum hardware with an increasing number of qubits or reduced error rates~\cite{ibm-quantum-experience,intel-quantum-presskit,google-sycamore-quantum-supremacy,google-quantum-supremacy,2003.01293v1}.

Along with the hardware improvement, new quantum algorithms were discovered, yielding
potential quantum speed-up and applications in various fields such as quantum chemistry~\cite{1812.09976v2}, linear algebra~\cite{Harrow2009,1806.01838v1,PhysRevA.101.022322,1909.03898v1,1909.05820v1,1909.07344v2} or optimisation~\cite{1808.09266v1,1704.04992v3,1411.4028v1}. Recent works even show that differential equations may be solved
by using a quantum computer~\cite{1901.00961v1,1909.06619v1,1809.02622v1,1812.10572v2,1807.04553v1,1701.03684v2,1010.2745v2,0812.4423v1,2002.07868v1,10.1016j.jcp.2020.109347,1907.09032v2}.
But despite the large number of algorithms available, it is hard to find an actual implementation
of a quantum differential equation solver, Hamiltonian simulation being the unique exception
by solving the time-dependant Schr\"odinger equation.

In this work, we present and analyse a quantum wave equation solver we implemented from scratch
according to the algorithm depicted in~\cite{1711.05394v1}. During the solver implementation, we had
to look for a Hamiltonian Simulation procedure. The implementations we found being too
restricted, we decided to implement our own Hamiltonian Simulation procedure, which will also be analysed.

To the best of our knowledge, this work is the first to analyse experimentally the characteristics
of a quantum PDE solver. Such a study has already been performed on the HHL algorithm, in~\cite{1505.06552v2}. 
We checked that the practical implementation agrees with the theoretical asymptotic complexities
on several quantities of interest such as the total gate count with respect to the number of discretisation points used or the precision, the number of qubits required versus the number of discretisation points used to approximate the solution or precision of the solution when compared to a classical finite-difference solver. Finally, we verified that the execution time of the generated quantum circuit on today's accessible quantum hardware was still following the theoretical asymptotic complexities devised for the total gate count. Quantum hardware data were extracted from IBM Q chips.

We show experimentally that it is possible to solve the 1-dimensional wave equation on a quantum computer
with a time-complexity that grows as \bigO{N_d^{3/2}\log\left(N_d\right)^2} where $N_d$ is the number
of discretisation points used to approximate the solution. But even if the asymptotic scaling is better
than classical algorithms, we found out that the constants hidden in the big-O notation were huge
enough to make the solver less efficient than classical solvers for reasonable discretisation sizes.

\section{Problem considered}\label{sec:problem_considered}

We consider a simplified version of the
wave equation on the 1-dimensional line $[0, 1]$ where the propagation speed $c$
is constant and equal to $1$. This equation can be written as

\begin{equation}
  \label{eq:simplified_wave_eq}
  \frac{\partial^2}{\partial t^2} \phi(x,t) = \frac{\partial^2}{\partial x^2}\phi(x,t).
\end{equation}

Moreover, we only consider solving \cref{eq:simplified_wave_eq} with the Dirichlet boundary conditions
\begin{equation}
  \label{eq:dirichlet_bound_cond}
  \frac{\partial}{\partial x} \phi(0, t) = \frac{\partial}{\partial x} \phi(1, t) = 0.
\end{equation}

No assumption is made on initial speed $\phi(x, 0)$ and initial velocity
$\frac{\partial \phi}{\partial t}(x, 0)$.

The resolution of this simplified wave equation on a quantum computer is an
appealing problem for the first implementation of a PDE solver for several
reasons. First, the wave equation is a well-known and intensively studied
problem for which a lot of theoretical results have been verified. Secondly,
even-though it is a relatively simple PDE, the wave equation can be used to
solve some interesting problems such as seismic imaging~\cite{Bamberger1979,bamberger1977application}. Finally, the theoretical
implementation of a quantum wave equation solver has already
been studied in~\cite{1711.05394v1}.

In this paper, we present the complete implementation of a
1-dimensional wave equation solver using quantum technologies based on \texttt{qat} library. To the best of
our knowledge, this work is the first to consider the implementation of an
entire PDE solver that can run on a quantum computer. Specifically, we explain
all the implementation details of the solver from the mathematical theory to the
actual quantum circuit used. The characteristics of the solver are then discussed and
analysed, such as the estimated gate count and estimated execution time on
real quantum hardware. We show that the implementation follows the theoretical
asymptotic behaviours devised in~\cite{1711.05394v1}. Moreover, the wave equation
solver algorithm relies critically on an efficient implementation of a Hamiltonian
simulation algorithm, which we have also implemented and analysed thoroughly.

% Notations?

\section{Implementation}\label{sec:implementation}

The algorithm used to solve the wave equation is explained in~\cite{1711.05394v1}
and uses a Hamiltonian simulation procedure.~\citeauthor{1711.05394v1} chose
the Hamiltonian simulation algorithm described in~\cite{1501.01715v3} for its
nearly optimal theoretical asymptotic behaviour. We privileged instead
the Hamiltonian simulation procedure explained in~\cite{2004-ahokas-graeme-robert-improved-algorithms-for-approximate-quantum-fourier-transforms-and-sparse-hamiltonian-simulations,quant-ph0508139v2}
for its good experimental results based on~\cite{1711.10980v1} and its simpler
implementation (detailed in \cref{sec:pf_impl_details}).

The code has been written using \texttt{qat}, a Python library
shipped with the Quantum Learning Machine (QLM), a package developed and
maintained by Atos. It has not been extensively optimized yet, which means that
there is still a large room for possible improvements.

All the circuits used in this paper have been generated with a subset of
\texttt{qat}'s gate set:
\begin{equation}
  \label{eq:qat_gateset}
  % \left\{ H, X, Y, Z, P_h\left( \theta \right), S, T, R_x\left( \theta
  %   \right), R_y\left( \theta \right), R_z\left( \theta \right), CNOT, CCNOT
  % \right\},
  \left\{ H, X, R_y\left( \theta \right), P_h\left( \theta \right), CP_h\left( \theta \right), CNOT, CCNOT \right\}
\end{equation}
and have then been translated to the gate set
\begin{equation}
  \label{eq:ibmq_gateset}
  \left\{ U_1\left( \lambda \right), U_2\left( \lambda, \phi \right), U_3\left( \lambda, \phi, \theta \right), CNOT \right\}
\end{equation}
for $U_1$, $U_2$ and $U_3$ defined in Equation (7) of~\cite{1804.03719v1} as follow:
\begin{equation}
  \label{eq:IBM_base_gates}
  U(\lambda, \phi, \theta) =
  \begin{pmatrix}
    \cos\left( \frac{\theta}{2} \right) & -e^{i\lambda} \sin\left( \frac{\theta}{2} \right) \\
    e^{i\phi} \sin\left( \frac{\theta}{2} \right) & e^{i(\lambda + \phi)} \cos\left( \frac{\theta}{2} \right)
  \end{pmatrix}
\end{equation}
\begin{equation}
  \label{eq:u3}
  U_3(\lambda, \phi, \theta) = U(\lambda, \phi, \theta)
\end{equation}
\begin{equation}
  \label{eq:u2}
  U_2(\lambda, \phi) = U\left(\frac{\pi}{2}, \lambda, \phi\right)
\end{equation}
\begin{equation}
  \label{eq:u1}
  U_1(\lambda) = U(0, 0, \lambda)
\end{equation}
\begin{note} 
  The target gate set presented in \cref{eq:ibmq_gateset}
  does not correspond to the physical gate set implemented by IBM hardware
  (see Equation (8) of~\cite{1804.03719v1}). This choice is justified by the
  fact that IBM only provides hardware characteristics such as gate times
  for the gate set of \cref{eq:ibmq_gateset} and not for the
  real hardware gate set. 
\end{note}
% \begin{note}
%   Our implementation ignores several practical details such as hardware topology
%   or the number of available qubits. 
% \end{note}

This implementation aims at validating in practice the theoretical asymptotic
complexities of Hamiltonian simulation algorithms and providing a
proof-of-concept showing that it is possible to solve a partial differential
equation on a quantum computer.

\subsection{Sparse Hamiltonian simulation algorithm}\label{sec:sparse_hamsim_res}

% First part about Hamiltonian Simulation algorithm implemented:
% Implemented the product-formula approach (appendix for details)
% - because it is one of the simplest to implement''
% - because it can be quite efficient (Towards the first...)
% LCU approach envisaged but not implemented for the moment.

\begin{definition}{\textit{$s$-sparse matrix}:}
  A $s$-sparse matrix with $s\in \mathbb{N}^*$ is a matrix that has at most $s$ non-zero entries per row and per column
\end{definition}

\begin{definition}{\textit{sparse matrix}:}
  A sparse matrix is a $s$-sparse matrix with $s \in \bigO{\log(N)}$, $N$ being the size of the matrix.
\end{definition}

In the past years, a lot of algorithms have been devised to simulate the
effect of a Hamiltonian on a quantum state~\cite{1606.02685v2,1610.06546v2,1707.05391v1,1807.03967v1,0910.4157v4,1501.01715v3,quant-ph0508139v2,1412.4687v1,1805.00582v1,1312.1414v2,1202.5822v1,1003.3683v2}.
Among all these algorithms, only few have already been implemented for
specific cases~\cite{qiskit.aqua.evolve,simcount} but to the best of our
knowledge no implementation is currently capable of simulating a generic sparse
Hamiltonian. 

The domain of application of the already existing methods being too
narrow, we decided to implement our own generic sparse Hamiltonian simulation
procedure. We based our work on the product-formula approach described in~\cite{2004-ahokas-graeme-robert-improved-algorithms-for-approximate-quantum-fourier-transforms-and-sparse-hamiltonian-simulations,quant-ph0508139v2}.
One advantage of this approach is that product-formula based algorithms have
already been thoroughly analysed both theoretically~\cite{quant-ph0508139v2,2004-ahokas-graeme-robert-improved-algorithms-for-approximate-quantum-fourier-transforms-and-sparse-hamiltonian-simulations}
and practically~\cite{1711.10980v1,1505.06552v2}, and several
implementations are publicly available, though restricted to Hamiltonians that
can be decomposed as a sum of tensor products of Pauli matrices.
Moreover,~\cite{2004-ahokas-graeme-robert-improved-algorithms-for-approximate-quantum-fourier-transforms-and-sparse-hamiltonian-simulations}
provides a lot of implementation details that allowed us to go straight to the
development step.

Our implementation is capable of simulating an arbitrary sparse Hamiltonian
provided that it has already been decomposed into a sum of $1$-sparse
Hermitian matrices with either only real or only complex entries, each described
by an oracle.
The implementation has been validated with several automated tests and a more
complex case involving the simulation of a $2$-sparse Hamiltonian and described
in \cref{sec:quantum_wave_eq_solver}. Furthermore, it agrees perfectly
with the theoretical complexities devised in~\cite{2004-ahokas-graeme-robert-improved-algorithms-for-approximate-quantum-fourier-transforms-and-sparse-hamiltonian-simulations,quant-ph0508139v2}
as studied and verified in \cref{sec:results}.

% \hl{Add a little sentence to finish and graphs \textbf{only about} Hamiltonian
% simulation (no wave equation solver yet).}

\subsection{Quantum wave equation solver}\label{sec:quantum_wave_eq_solver}

Using the Hamiltonian simulation algorithm implementation, we successfully
implemented a 1-dimensional wave equation solver using the algorithm described
in~\cite{1711.05394v1} and explained in
\cref{sec:herm_matrix_construction_decomposition} and \cref{sec:oracle_construction}.

For the specific case considered (\cref{eq:simplified_wave_eq} and
\cref{eq:dirichlet_bound_cond}), 
solving the wave equation for a time $T$ on a quantum computer boils down to
simulating a $2$-sparse Hamiltonian for a time $f(T)$, the function $f$ being
thoroughly described in~\cite{1711.05394v1} and \cref{eq:f_t_def}.
The constructed quantum circuit can then be
applied to a quantum state representing the initial position $\psi(x, 0)$ and
velocity $\frac{\partial \phi}{\partial t} (x, 0)$, and will evolve this state
towards a quantum state representing the final position $\phi(x, T)$
and velocity $\frac{\partial \phi}{\partial t}(x, T)$.

As for the Hamiltonian simulation procedure, the practical results we obtain from the implementation of the quantum wave equation solver seems to match the theoretical asymptotic complexities. See \cref{sec:results} for an analysis of the theoretical asymptotic complexities.

\section{Results}\label{sec:results}

Using a simulator instead of a real quantum computer has several advantages. In
terms of development process, a simulator allows the developer to perform several
actions that are not possible as-is on a quantum processor such as describing a
quantum gate with a unitary matrix instead of a sequence of hardware operations.
Another useful operation that is possible on a quantum simulator and
not currently achievable on a quantum processor is efficient generic state
preparation.

Our implementation uses only standard quantum gates and does not leverage any of
the simulator-only features such as quantum gates implemented from a unitary
matrix. In other words, both the Hamiltonian simulation procedure and the
quantum wave equation solver are ``fully quantum'' and are readily executable on
a quantum processor, provided that it has enough qubits. As a proof, and in
order to benchmark our implementation, we translated the generated quantum circuits
to IBM Q Melbourne gate-set (see \cref{eq:ibmq_gateset}). IBM Q Melbourne~\cite{ibmq-melbourne}
is a quantum chip with 14 usable qubits made available by IBM the 23\textsuperscript{th}
of September, 2018.

\begin{note}

  We chose IBM Q Melbourne mainly because, at the time of writing, it was the publicly accessible quantum chip with the larger number of qubits and so was deemed to be the closest to future quantum hardware. It is important to note that even if IBM Q Melbourne has 14 qubits, the quantum circuits constructed in this paper are not runnable because they require more qubits. Consequently, because of this hardware limitation, hardware topology has also been left apart of the study.
\end{note}

This allowed us to have an estimation of the number of hardware gates needed to either
solve the wave equation or simulate a specific Hamiltonian on this specific hardware.
Combining these numbers and the hardware gate execution time published in~\cite{melbourne.gatetimings}, we were able to compute a rough approximation of
the time needed to solve the considered problem presented in
\cref{eq:simplified_wave_eq} and \cref{eq:dirichlet_bound_cond} on this specific hardware.

\subsection{Hamiltonian simulation}\label{sec:ham_sim_results}

As explained in \cref{sec:sparse_hamsim_res}, the Hamiltonian simulation
algorithm implemented has been first devised in~\cite{2004-ahokas-graeme-robert-improved-algorithms-for-approximate-quantum-fourier-transforms-and-sparse-hamiltonian-simulations,quant-ph0508139v2}.
A quick review of the algorithm along with implementation details can be
found in \cref{sec:pf_impl_details}.
This Hamiltonian simulation procedure requires that the Hamiltonian matrix $H$
to simulate can be decomposed as 
\begin{equation}
  \label{eq:H_decomposition}
  H = \sum_{j=1}^{m} H_j
\end{equation}
where each $H_j$ is an efficiently simulable Hermitian matrix.

In our benchmark, we simulated the Hamiltonian described in \cref{eq:H_matrix}.
According to~\cite{2004-ahokas-graeme-robert-improved-algorithms-for-approximate-quantum-fourier-transforms-and-sparse-hamiltonian-simulations},
real $1$-sparse Hermitian matrices with only $1$ or $0$ entries can be
simulated with $\bigO{n}$ gates and $2$ calls to the oracle, $n$ being the
number of qubits the Hamiltonian $H$ acts on. The exact gate count can be found in \cref{tab:gate_counts} in the row \texttt{$1$-sparse HS}.

Let $O_i$ be the gate complexity of
the oracle implementing the $i$\textsuperscript{th} Hermitian matrix $H_i$ of the decomposition in
\cref{eq:H_decomposition}, we end up with an asymptotic complexity
of $\bigO{n + O_i}$ to simulate $H_i$. Once again, the exact gate count is decomposed in \cref{tab:gate_counts}. 

Applying the Trotter-Suzuki product-formula of order $k$ (see
Definition~\cref{def:lie-trotter-suzuki-prod-formula} in \cref{sec:recomposition_step}
for the definition of the Trotter-Suzuki product-formula) on the quantum circuit
simulating the Hermitian matrices produces a circuit of size

\begin{equation}
  \label{eq:after_pf_trotter_application}
  \bigO{5^k \sum_{i = 1}^m (n + O_i)}.
\end{equation}

This circuit should finally be repeated $r$ times in order to achieve an error
of at most $\epsilon$, with
\begin{equation}
  \label{eq:r_o_notation}
  r \in \bigO{5^{k} m \tau {\left(\frac{m \tau}{\epsilon}\right)}^{\frac{1}{2k}}},
\end{equation}
and $\tau = t \max_i \vert\vert H_i \vert\vert$, $t$ being the time
for which we want to simulate the given Hamiltonian and
$\vert\vert\cdot\vert\vert$ being the spectral norm~\cite{quant-ph0508139v2}.

Merging \cref{eq:after_pf_trotter_application} and
\cref{eq:r_o_notation} gives us the complexity

\begin{equation}
  \label{eq:ham_sim_complexity}
  \bigO{5^{2k} m \tau {\left(\frac{m \tau}{\epsilon}\right)}^\frac{1}{2k} \sum_{i = 1}^m (n + O_i)}.
\end{equation}

This generic expression of the asymptotic complexity can be specialized to our
benchmark case. The number of gates needed to implement the
oracles is $\bigO{n^2}$ and the chosen decomposition contains $m = 2$
Hermitian matrices, each with a spectral norm of $1$. Replacing the symbols in
\cref{eq:after_pf_trotter_application} and \cref{eq:r_o_notation}
results in the asymptotic gate complexity of
\begin{equation}
  \label{eq:oracle_complexity_hamsim}
  \bigO{%
    5^{k} n^2%
  }
\end{equation}
for the circuit
simulating $e^{-iHt/r}$ and a number
\begin{equation}
  \label{eq:repetition_complexity_hamsim}
  r \in \bigO{5^k t {\left( \frac{t}{\epsilon} \right)}^{\frac{1}{2k}}}
\end{equation}
of repetitions, which lead to a total gate complexity of
\begin{equation}
  \label{eq:ham_sim_specialised_complexity}
  \bigO{%
    5^{2k} n^2 t {\left( \frac{t}{\epsilon} \right)}^{\frac{1}{2k}}%
  }.
\end{equation}

In order to check that our implementation follows this theoretical asymptotic
behaviour, we chose to let $k = 1$ and plotted the number of gates generated
versus the three parameters that have an impact on the number of gates: the
number of discretisation points $N_d$ (\cref{fig:gate_num_vs_discr_size}),
the time of simulation $t$ (\cref{fig:gate_num_vs_sim_time}) and the
precision $\epsilon$ (\cref{fig:gate_num_vs_precision}). The corresponding
asymptotic complexity should be
\begin{equation}
  \label{eq:ham_sim_specialised_complexity_k}
  \bigO{%
    n^2 \frac{t^{3/2}}{\sqrt{\epsilon}}%
  } = \bigO{%
    \log_2{\left( N_d \right)}^2 \frac{t^{3/2}}{\sqrt{\epsilon}}%
  }.
\end{equation}

A small discrepency can be observed in \cref{fig:gate_num_vs_discr_size}: the theoretical asymptotic number of gates is \bigO{\log_2\left(N\right)^2} but the experimental values seem better fitted with an asymptotic behaviour of \bigO{\log_2\left(N\right)^{7/4}}. This may be caused by the asymptotic regime not being reached yet.

\begin{figure}
  \centering
  % \begin{subfigure}[t]{.47\linewidth}
  %   \centering
  %   \includegraphics[width=\linewidth]{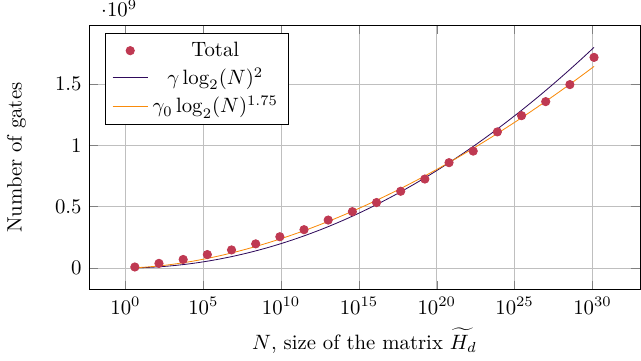}
  %   \caption{Number of quantum gates versus simulated matrix size. The values of
  %   the constants $\gamma = 250\,000$ and $\gamma_0 = 2\,000\,000$ have been chosen arbitrarily to fit
  %   the experimental data.
  %     %   A small discrepancy can be observed: the data seems to be better fitted by
  %     %   $\log_2{\left(N\right)}^{3/2}$ instead of the theoretical $\log_2{\left(N\right)}^2$.
  %     %   This non-concordance may be due to the fact that the quantity observed (number of
  %     %   gates) has not reached its asymptotic behaviour yet. 
  % }\label{fig:gate_num_vs_discr_size} 
  % \end{subfigure}\quad%
  % \begin{subfigure}[t]{.47\linewidth}
  %   \includegraphics[width=\linewidth]{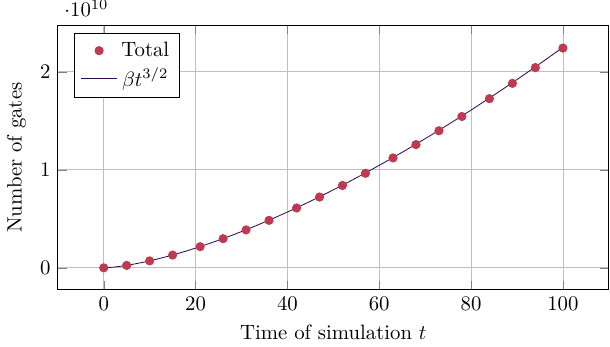}
  %   \caption{Number of quantum gates versus physical time. The value of $\beta = 39\,000\,000$ has been chosen arbitrarily to fit
  %   the experimental data.}\label{fig:gate_num_vs_sim_time}    
  % \end{subfigure}\quad%
  % \begin{subfigure}[c]{.7\linewidth}
  %   \centering
  %   \includegraphics[width=\linewidth]{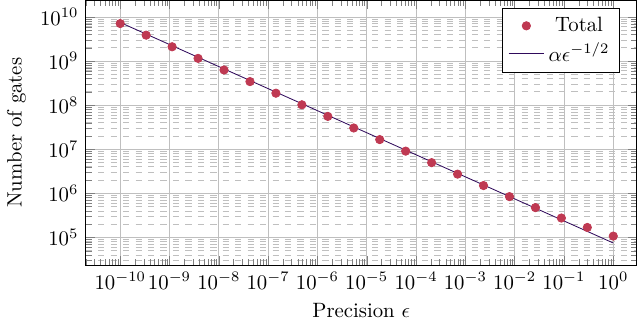}
  %   \caption{Number of quantum gates versus targeted solution precision. The
  %   value of $\alpha = 130\,000$ has been chosen arbitrarily to fit the experimental
  %   data.}\label{fig:gate_num_vs_precision}
  % \end{subfigure}
  \subfigure[{%
    Number of quantum gates versus simulated matrix size. The values of the constants $\gamma = 250\,000$ and $\gamma_0 = 2\,000\,000$ have been chosen arbitrarily to fit the experimental data.%
  }]{\includegraphics[width=.47\linewidth]{graphs/gate_num_vs_discr_size.pdf}\label{fig:gate_num_vs_discr_size}}\hspace{.03\linewidth}
  \subfigure[{%
    Number of quantum gates versus physical time. The value of $\beta = 39\,000\,000$ has been chosen arbitrarily to fit the experimental data.%
  }]{\includegraphics[width=.47\linewidth]{graphs/gate_num_vs_sim_time.pdf}\label{fig:gate_num_vs_sim_time}}
  \subfigure[{%
    Number of quantum gates versus targeted solution precision. The value of $\alpha = 130\,000$ has been chosen arbitrarily to fit the experimental data.%
  }]{\includegraphics[width=.7\linewidth]{graphs/gate_num_vs_precision.pdf}\label{fig:gate_num_vs_precision}}
  %\Description[Agreement between theoretical asymptotic complexities and measured number of gates]{%
  %Measured number of gates with respect to the size of the simulated Hermitian matrix seems to grow as log N to the power 1.75 instead of a power of 2 in theory. When plotted with respect to the time of simulation or the required precision, the number of gates agrees perfectly with the asymptotic compelxities devised.
%}
  \caption{Number of quantum gates needed to simulate the
    Hamiltonian described in \cref{sec:herm_matrix_construction_decomposition}
    using the oracles implemented following \cref{sec:oracle_construction}.
    Graphs generated with a Trotter-Suzuki product-formula order $k = 1$,
    $32$ discretisation points (i.e.\ $n = 6$ qubits) for \cref{fig:gate_num_vs_sim_time}
    and \cref{fig:gate_num_vs_precision},
    a physical time $t = 1$ for \cref{fig:gate_num_vs_discr_size} and
    \cref{fig:gate_num_vs_precision} and a precision $\epsilon =
    10^{-5}$ for \cref{fig:gate_num_vs_discr_size} and \cref{fig:gate_num_vs_sim_time}.}
\end{figure}

% \begin{figure}
%   \centering
%   \includegraphics[width=\linewidth]{graphs/gate_num_vs_precision.pdf}
%   \caption{Plot of the number of quantum gates needed to simulate the
%   Hamiltonian described in \cref{sec:herm_matrix_construction_decomposition}
%   using the oracles implemented following \cref{sec:oracle_construction}.
%   Graph generated with a Trotter-Suzuki product-formula order $k = 1$,
%   $32$ discretisation points (i.e.\ $n = 6$ qubits) and a time $t = 1$. The
%   value of $\alpha = 130000$ has been chosen arbitrarily to fit the experimental
%   data.}\label{fig:gate_num_vs_precision}
% \end{figure}

\subsection{Wave equation solver}\label{sec:wave_eq_solver_results}

\begin{figure}
  \centering
  \includegraphics[width=.7\linewidth]{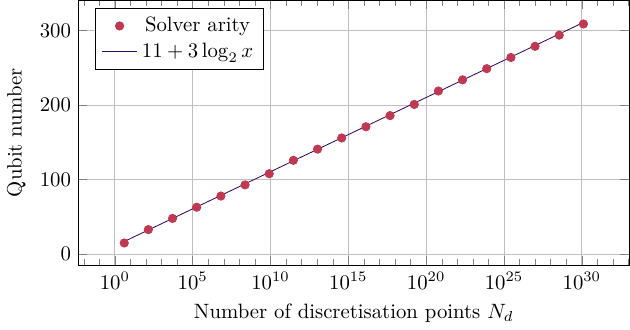}
  %\Description[Number of qubits needed by the wave equation solver versus the number of discretisation points used]{%
  %The quantum wave equation solver requires a logarithmicaly-growing number of qubits. Measurements performed on the generated quantum circuits agree perfectly with the theory.
%}
  \caption{Plot of the number of logical qubits needed to run the wave equation solver
    for a time $t=1$, a precision $\epsilon = 10^{-5}$ and a Trotter-Suzuki
    product-formula of order $k = 1$. The constants values $11$ and $3$ have been
    chosen arbitrarily to fit the experimental data. The number of physical qubits needed
    will depend on their error rate as noted in~\cite{Fowler2012}. Multiplying the number of
    logical qubits by $3$ to $4$ orders of magnitude might be a good estimate of the actual
    number of physical qubits required.}\label{fig:arity_vs_discr_size}
\end{figure}

The first characteristic of the wave equation solver that needs to be checked is its validity: is
the quantum wave equation solver capable of solving accurately the wave equation as described in
\cref{eq:simplified_wave_eq} and \cref{eq:dirichlet_bound_cond}?

To check the validity of the solver, we used \texttt{qat} simulators and Atos QLM to simulate the
quantum program generated to solve the wave equation with different values for the number of discretisation
points $N_d$, for the physical time $t$ and for the precision $\epsilon$.
\cref{fig:classical-vs-quantum-with-error} shows the classical solution
versus the quantum solution and the absolute error between the two solutions
for $N_d = 32$, $t = 0.4$ and $\epsilon = 10^{-3}$.
The solution obtained by the quantum solver is nearly exactly the same as the classical solution
obtained with finite differences. The error between the two solutions is of the order of $10^{-7}$,
which is $4$ orders of magnitudes smaller than the error we asked for. 

\begin{figure}
  \centering
  \subfigure[{%
    Quantum versus classical solution. Solutions are not visually distinguishable on the graph, see the associated absolute error.%
  }]{\includegraphics[width=.47\linewidth]{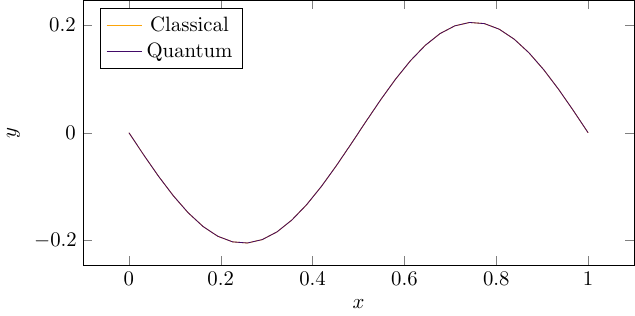}\label{fig:classical_vs_quantum}}\hspace{.03\linewidth}%
  \subfigure[{%
    Absolute error between the solution obtained by a classical finite-difference solver and the solution computed with the quantum solver.%
  }]{\includegraphics[width=.47\linewidth]{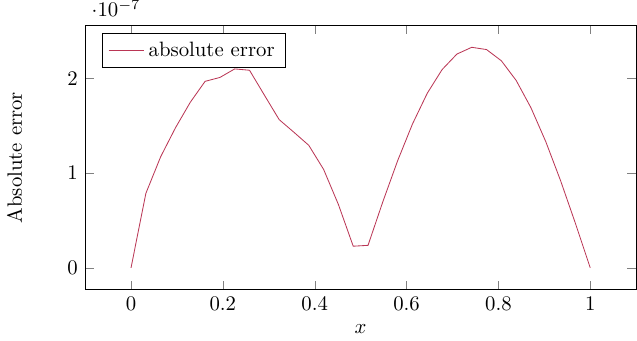}\label{fig:classical_vs_quantum_error}}
  % \Description[Plot of the comparison between a classical, finite-differences based, solver and the quantum solver.]{%
  %   The classical and quantum solvers give the exact same solution up to a maximum absolute error of approximately 2 to the power -7.%
  % }
  % \begin{subfigure}[t]{.48\linewidth}
  %   \centering
  %   \includegraphics[width=\linewidth]{graphs/classical_vs_quantum.pdf}
  %   \caption{Quantum versus classical solution. Solutions are not visually distinguishable
  %   on the graph, see the associated absolute error.}\label{fig:classical_vs_quantum}
  % \end{subfigure}\quad%
  % \begin{subfigure}[t]{.48\linewidth}
  %   \includegraphics[width=\linewidth]{graphs/classical_vs_quantum_error.pdf}
  %   \caption{Absolute error between the solution obtained by a classical finite-difference solver
  %   and the solution computed with the quantum solver.}\label{fig:classical_vs_quantum_error}
  % \end{subfigure}
  \caption{Comparison of the classical solver and the quantum solver. Both solvers solved
    the $1$-D wave equation with $N_d = 32$ discretisation points and a physical time of
    $t = 0.4$. The classical solver uses finite-differences with a very small time-step
    in order to avoid as much as possible errors due to time-discretisation. The quantum
    solver was instructed to solve the wave equation with a precision of at least
    $\epsilon = 10^{-3}$, used a Trotter-Suzuki order of $k = 1$.
    The solutions of the two solvers are too close to be able to
    notice a difference (they overlap on the graph), that is why a second graph plotting the
    absolute error between the two solvers is included.}\label{fig:classical-vs-quantum-with-error}
\end{figure}

% \begin{figure}
%   \centering
%   \includegraphics[width=\linewidth]{}
% \end{figure}

Once the validity of our solver has been checked on multiple test cases, the next interesting
property we would like to verify is the asymptotic cost: does the implemented simulator seem
to agree with the theoretical asymptotic complexities derived from~\cite{1711.05394v1}
and~\cite{quant-ph0508139v2}?

In our specific case, the Hamiltonian $H$ to simulate can be decomposed in two
$1$-sparse Hermitian matrices, both of them having a spectral norm of $1$. The exact decomposition can be found in \cref{sec:matrices_construction}. We chose to
let the product-formula order be equal to $k=1$ and reuse the asymptotic complexity
found in \cref{eq:ham_sim_specialised_complexity} by changing the time
of simulation $t$ by the time $f(t)$:

\begin{equation}
  \label{eq:wave_eq_solver_f_complexity}
  \bigO{%
    5^{2k} n^2 f(t) {\left( \frac{f(t)}{\epsilon} \right)}^{\frac{1}{2k}}%
  }.
\end{equation}

Following the study performed in~\cite{1711.05394v1},
\begin{equation}
  \label{eq:f_t_def}
  f(t) = \frac{t}{\delta x} = t \left(N_d - 1\right)
\end{equation}
where $\delta x$ is the distance between two discretisation points.
Moreover, it is possible to prove (see \cref{sec:matrices_construction})
that 
\begin{equation}
  \label{eq:replace_n_with_nd}
  n = \left\lceil \log_2(2N_d - 1) \right\rceil
\end{equation}

Replacing $f(t)$ and $n$ in \cref{eq:after_pf_trotter_application} and
\cref{eq:r_o_notation} gives us a gate complexity of
\begin{equation}
  \label{eq:oracle_complexity_waveeq}
  \bigO{5^k \log_2\left( N_d \right)^2}
\end{equation}
to construct a circuit simulating $e^{-iHt/r}$ and a number of repetitions
\begin{equation}
  \label{eq:repetition_complexity_waveeq}
  r \in \bigO{5^k t N_d {\left( \frac{t N_d}{\epsilon} \right)}^{\frac{1}{2k}}}.
\end{equation}
% The only difference between the results presented in \cref{sec:ham_sim_results}
% and the ones presented here is the replacement of $t$ with $N_d t$.

Merging the two expression results in a gate complexity of

\begin{equation}
  \label{eq:wave_eq_solver_complexity}
  \bigO{%
    5^{2k} t N_d {\log_2(N_d)}^2 {\left(\frac{t N_d}{\epsilon}\right)}^{\frac{1}{2k}}%
  }.
\end{equation}

Choosing the Totter-Suzuki formula order $k = 1$ gives us a final complexity of
\begin{equation}
  \label{eq:wave_eq_solver_complexity_k}
  \bigO{%
    N_d^{3/2} {\log_2(N_d)}^2 \frac{t^{3/2}}{\sqrt{\epsilon}}%
  }
\end{equation}
to solve the wave equation presented in \cref{eq:simplified_wave_eq}.
This theoretical result is verified experimentally in
\cref{fig:gate_num_vs_discr_size_wave_eq_solver}.

\begin{figure}
  \centering
  \subfigure[{%
    Number of quantum gates needed to solve the wave equation described in \cref{eq:simplified_wave_eq} versus discretisation size. The  value of $\lambda = 300\,000$ has been chosen arbitrarily to fit the experimental data.%
  }]{\includegraphics[width=.47\linewidth]{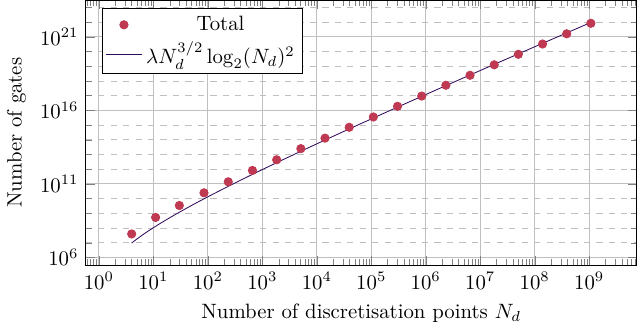}\label{fig:gate_num_vs_discr_size_wave_eq_solver}}\hspace{.03\linewidth}%
  \subfigure[{%
    Estimated execution time of the wave equation solver on IBM Q Melbourne hardware. Individual gate times have been extracted from~\cite{melbourne.gatedecomposition} and~\cite{melbourne.gatetimings}. GF pulse time has been approximated via arithmetic mean to $347$ns, GD pulse time is $100$ns and buffer time is $20$ns. The value of $\delta = 0.06$ has been chosen arbitrarily to fit the experimental data.%
  }]{\includegraphics[width=.47\linewidth]{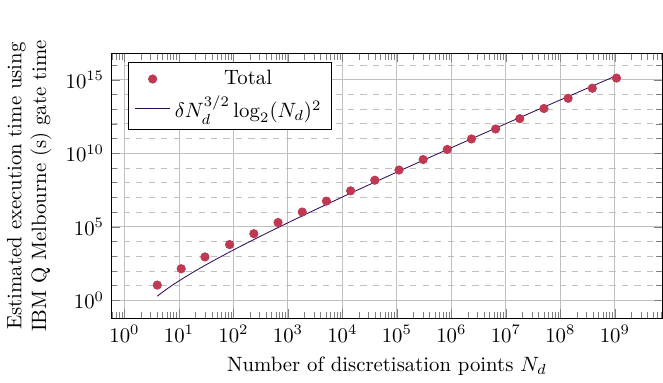}\label{fig:execution_time_melbourne_vs_discr_size}}
  % \Description[Number of gates and estimated execution time of the quantum wave equation solver]{%
  %   The number of gates and the estimated execution time of the quantum wave equation solver on a fictive quantum chip mimicking IBMQ 16 Melbourne execution times perfectly agree with the asymptotic compelxities devised.
  % }
  % \begin{subfigure}[t]{.48\linewidth}
  %   \centering
  %   \includegraphics[width=\linewidth]{graphs/gate_num_vs_discr_size_wave_eq_solver.pdf}
  %   \caption{Number of quantum gates needed to solve the wave equation
  %   described in \cref{eq:simplified_wave_eq} versus discretisation size.
  %   The  value of $\lambda = 250\,000$ has been
  %   chosen arbitrarily to fit the experimental data.}\label{fig:gate_num_vs_discr_size_wave_eq_solver}
  % \end{subfigure}\quad%
  % \begin{subfigure}[t]{.48\linewidth}
  %   \centering
  %   \includegraphics[width=\linewidth]{graphs/execution_time_melbourne_vs_discr_size.pdf}
  %   \caption{Estimated execution time of the wave equation solver on IBM Q
  %   Melbourne hardware. Individual gate times have been extracted
  %   from~\cite{melbourne.gatedecomposition} and~\cite{melbourne.gatetimings}. GF pulse time
  %   has been approximated via arithmetic mean to $347$ns, GD pulse time is $100$ns and buffer
  %   time is $20$ns. The value of $\delta = 0.05$ has been
  %   chosen arbitrarily to fit the experimental data.}\label{fig:execution_time_melbourne_vs_discr_size}
  % \end{subfigure}
  \caption{Graphs generated with a
    Trotter-Suzuki product-formula order $k = 1$, a physical time $t=1$ and a
    precision $\epsilon = 10^{-5}$. }
\end{figure}

\section{Discussion}\label{sec:discussion}

In this work, we focus on the practical cost of implementing a $1$-dimensional quantum wave equation solver on a quantum computer. We show that a quantum computer is able to solve partial differential equations by constructing and simulating the quantum circuits described. We also study the scaling of the solver with respect to several parameters of interest and show that the theoretical asymptotic bounds are mostly verified.

In future works, one can study the possibilities of circuit optimisation. It would also be interesting to implement Neumann boundary conditions instead of Dirichlet ones. A practical implementation including a non-constant propagation speed $c$ has also been realised during the writing of this paper. The results were encouraging but were not judged mature enought to include them in the paper. Finally, future works might want to extend the wave equation solver to $2$ dimensions or more. 

% Implication(s) of this work:
% - PDE can be solver on a quantum computer (not just theoretically)
% - Generic Hamiltonian Simulation routine implementable also
% - Practical verification of theoretical complexities
% - Hardware requirements needed to solve a simple PDE on a quantum computer

% Future work:
% - Implement non-constant propagation speed
% - Implement 2D
% - Implement Neumann boundary conditions

\begin{acknowledgements}
  The authors would like to thank Reims University, the ROMEO HPC center, Total, the CCRT and Atos for their support by giving us access to Atos quantum simulator.
\end{acknowledgements}

\section*{Supplementary material}
\label{sec:suppl-mater}

The implementation of the quantum wave equation solver is available at \url{https://gitlab.com/cerfacs/qaths}.
The \texttt{qprof} tool is available at \url{https://gitlab.com/qcomputing/qprof/qprof}.

% \bibliographystyle{plainnat}
% \bibliography{quantum,quantum-url}
% \printbibliography
\bibliographystyle{ACM-Reference-Format}
\bibliography{quantum,quantum-url}

\appendix

\section{Product-formula implementation details}\label{sec:pf_impl_details}

\subsection{Hamiltonian simulation}\label{sec:hamiltonian_simulation_definition}

Hamiltonian simulation is the problem of constructing a quantum circuit that will
evolve a quantum state according to a Hamiltonian matrix, following the
Schrödinger equation. In other words, Hamiltonian simulation algorithms generate a
quantum circuit performing the unitary transformation $U$ such that $ || U - e^{-iHt} || < \epsilon$,
$H$ being a given Hamiltonian matrix, $t$ a time of evolution and $\epsilon$ a
precision with respect to $||\cdot{}||$, the spectral norm.

Several quantum algorithms have been developed in the last few years to solve the problem
of $s$-sparse Hamiltonian simulation~\cite{1606.02685v2,1610.06546v2,1707.05391v1,1807.03967v1,0910.4157v4,1501.01715v3,quant-ph0508139v2,1412.4687v1,1805.00582v1,1312.1414v2,1202.5822v1,1003.3683v2}.
Among these algorithms we decided to implement the product-formula approach~\cite{quant-ph0508139v2,2004-ahokas-graeme-robert-improved-algorithms-for-approximate-quantum-fourier-transforms-and-sparse-hamiltonian-simulations},
for the reasons presented in \cref{sec:sparse_hamsim_res}.

The product formula algorithm has three main steps: decompose, simulate,
recompose. It works by first decomposing the $s$-sparse Hamiltonian matrix $H$ that
should be simulated as a sum of Hermitian matrices $H_j$ that are considered
easy to simulate
\begin{equation}
  \label{eq:H_sum_Hj}
  H = \sum_{j = 0}^{m-1} H_j.
\end{equation}
The second step is then to simulate each $H_j$ separately, i.e.\ to create
quantum circuits implementing $e^{-iH_{j}t}$ for all the $H_j$ in the decomposition
in \cref{eq:H_sum_Hj}. The last step uses the simulations computed in
step two to approximate $e^{-iHt}$.

The very first questions that should be answered before starting any
implementation of the product-formula algorithm are ``What is an easy to
simulate matrix?'' and ``What kind of Hermitian matrices are easy to
simulate?''.

\subsection{Easy to simulate matrices}\label{sec:easy_simu_matrices}

One of the most desirable properties for an ``easy to simulate'' matrix is the
possibility to simulate it exactly, i.e.\ to construct a quantum circuit that
will perfectly implement $e^{-iHt}$. This property becomes a requirement when one
wants rigorous bounds on the error of the final simulation. Another enviable
property of these matrices is that they can be simulated with
a low gate number and only a few calls to the matrix oracle. 

\begin{definition}[Easy to simulate matrix]
  \label{def:easy-to-simulate}
  A Hermitian matrix $H$ can be qualified as ``easy to simulate'' if there exist
  an algorithm that takes as input a time $t$ and the matrix $H$ and outputs a quantum
  circuit ${C(H)}_t$ such that
  \begin{enumerate}
  \item The quantum circuit ${C(H)}_t$ implements exactly the unitary transformation
    $e^{-iHt}$, i.e.
    \begin{equation*}
      \label{eq:exact-implementation-hamsim}
      \left| \left| e^{-iHt} - {C(H)}_t \right| \right| = 0.
    \end{equation*}
  \item The algorithm only needs $\bigO{1}$ calls to the oracle of $H$ and
    $\bigO{\log N}$ additional gates, $N$ being the dimension of the matrix $H$.
  \end{enumerate}
\end{definition}

% A last interesting property we will seek is the ability to construct every Hermitian
% matrix by summing only easy to simulate matrices.

With this definition of an ``easy to simulate'' matrix, we can now search for
matrices or group of matrices that satisfy this definition.

\subsubsection{\label{sec:multiples-identity}Multiples of the identity}

The first and easiest matrices that fulfil the easy to simulate matrix requirements
are the multiples of the identity matrix $\left\{ \alpha I, \, \alpha \in \mathbb{R} \right\}$
with $I$ the identity matrix. The quantum circuit to simulate this class of matrices can be
found in~\cite{2018-almudena-carrera-vazquez-quantum-algorithm-for-solving-tri-diagonal-linear-systems-of-equations}.

\subsubsection{\label{sec:1_sparse_hermitian_simu}$1$-sparse Hermitian matrices}

A larger class of matrices that can be efficiently and exactly simulated are the
$1$-sparse, integer weighted, Hermitian matrices. Quantum circuits simulating exactly $1$-sparse
matrices with integer weights can be found in~\cite{2004-ahokas-graeme-robert-improved-algorithms-for-approximate-quantum-fourier-transforms-and-sparse-hamiltonian-simulations}.

\begin{note}
  Procedures simulating $1$-sparse matrices with real (non-integers) weights are
  also described in the paper, but these matrices do not fall in the ``easy to simulate''
  category because the procedures explained are exact only if all the
  matrix weights can be represented exactly with a fixed-point representation,
  which is not always verified.
\end{note}

\begin{note}
  Multiples of identity matrices presented in 
  \cref{sec:multiples-identity} are a special case of $1$-sparse
  matrices. The two classes have been separated because more efficient
  quantum circuits exists for $\alpha I$ matrices.
\end{note}

\subsection{Decomposition of $H$}\label{sec:H_decomposition}

Once the set of ``easy to simulate'' matrices has been established, the next
step of the algorithm is to decompose the $s$-sparse matrix $H$ as a sum of
matrices in this set.

There are two possible ways of performing this decomposition, each one with its
advantages and drawbacks: applying a procedure computing the decomposition
automatically, or decompose the matrix $H$ beforehand and provide
the decomposition to the algorithm.

The first solution, which is to automatically construct the oracles of
the $H_j$ matrices from the oracle of the $H$ matrix has been studied in~\cite{2004-ahokas-graeme-robert-improved-algorithms-for-approximate-quantum-fourier-transforms-and-sparse-hamiltonian-simulations}
and~\cite{1003.3683v2}. Thanks to this automatic decomposition procedure, we only need to implement one oracle. This simplicity comes at the cost of a higher gate count: each call to the automatically constructed oracles of the matrices $H_j$ will require several calls to the oracle of $H$ along with additional gates.

On the other hand, the second solution offers more control at the cost of less abstraction and more work. The decomposition of $H$ is not automatically computed and should be performed beforehand. Once the matrix $H$ has been decomposed as in \cref{eq:H_decomposition}, the oracles for the matrices $H_j$ should be implemented.
This means that we should now implement $m$ oracles instead of only $1$ for the first solution. The main advantage of this method over the one using automatic-decomposition is that it gives us more control, a control that can be used to optimize even more the decomposition of \cref{eq:H_sum_Hj} (less $H_j$ in the decomposition, $H_j$ matrices that can be simulated more efficiently, \ldots).

% \begin{note}
%   It is also important to note that the oracle encoding the matrix $H$ encodes
%   \emph{the full matrix} whereas the oracles encoding the $H_j$ only encode
%   $1$-sparse matrices. The oracle encoding $H$ is then likely to be more complex
%   to implement, i.e.\ to need more gates than the oracle for one $H_j$.
%   Combined with the fact that, using the automatic-decomposition method, each oracle
%   $H_j$ is constructed with several calls to the oracle of $H$, 
% \end{note}

All the advantages and drawbacks weighted, we chose to implement the second option for
several reasons. First, the implementation of the automatic decomposition
procedure adds a non-negligible implementation complexity to the whole
Hamiltonian simulation procedure. Moreover, the automatic decomposition
procedure can be implemented afterwards and plugged effortlessly to the
non-automatic implementation. Finally, our use-case only required to simulate
a $2$-sparse Hamiltonian that can be decomposed as the sum of two $1$-sparse, easy to simulate,
Hermitian matrices, which makes the manual decomposition step manageable. 

\subsection{Simulation of the $H_j$}\label{sec:simulation_of_Hj}

Once the matrix $H$ has been decomposed following \cref{eq:H_sum_Hj}
with each $H_j$ being an ``easy to simulate'' matrix, the simulation
of $H_j$ becomes a straightforward application of the procedures described in
\cref{sec:easy_simu_matrices}.

After this step, we have access to quantum circuits implementing $e^{-iH_{j}t}$ for
$j \in \left[ 0, m-1 \right]$ and $t\in \mathbb{R}$.

\subsection{Re-composition of the $e^{-iH_{j}t}$}\label{sec:recomposition_step}

The ultimate step of the algorithm is to approximate the desired evolution $e^{-iHt}$ with the
evolutions $e^{-iH_{j}t}$. In the special case of mutually commuting $H_j$, this step is trivial
as it boils down to use the properties of the exponential function on matrices and write
$e^{iHt} = e^{i\sum_{j}H_{j}t} = \prod_j e^{iH_{j}t}$. But in the more realistic case where the matrices
$H_j$ do not commute, a more sophisticated method should be used to approximate the evolution
$e^{-iHt}$. To this end, we used the first-order Lie-Trotter-Suzuki product formula defined in
Definition~\cref{def:lie-trotter-suzuki-prod-formula}.

\begin{definition}[%
  Lie-Trotter-Suzuki product formula~\cite{Suzuki1990,Suzuki1986,1711.10980v1}%
  ]\label{def:lie-trotter-suzuki-prod-formula}
  The Lie-Trotter-Suzuki product formula approximates
  \begin{equation}
    \label{eq:exp-of-sum}
    \exp \left( \lambda \sum_{j=0}^{m-1} \alpha_j H_j  \right)
  \end{equation}
  with 
  \begin{equation}
    \label{eq:lie-trotter-suzuki-product-formula}
    S_2(\lambda) = \prod_{j=0}^{m-1} e^{\alpha_j H_j \lambda / 2}  \prod_{j=m-1}^{0} e^{\alpha_j H_j \lambda / 2}
  \end{equation}
  and can be generalized recursively to higher-orders
  \begin{equation}
    \label{eq:lie-trotter-suzuki-product-formula-order-k}
      S_{2k}\left( \lambda \right) =  {\left[ S_{2k-2}\left( p_k \lambda \right) \right]}^2 \times S_{2k-1}\left( \left( 1 - 4p_k \right) \lambda \right) {\left[ S_{2k-2}\left( p_k \lambda \right) \right]}^2
  \end{equation}
  with $p_k = {\left( 4 - 4^{1/(2k-1)} \right)}^{-1}$ for $k> 1$.
  Using this formula, we have the approximation
  \begin{equation}
    \label{eq:lie-trotter-suzuki-product-formula-timestep}
    e^{\lambda H} = {\left[ S_{2k}\left( \frac{\lambda}{n} \right) \right]}^n + \bigO{\frac{\vert\lambda\vert^{2k+1}}{n^{2k}}}.
  \end{equation}
  
\end{definition}

We used the Lie-Trotter-Suzuki product formula with $\lambda = -it$ to approximate
the operator $e^{-iHt}$ up to an error of $\epsilon \in \bigO{\frac{t^{2k+1}}{n^{2k}}}$.

\section{Hermitian matrix construction and decomposition}\label{sec:herm_matrix_construction_decomposition}

One of the main challenge in implementing a quantum wave equation solver lies in
the construction and implementation of the needed oracles. This appendix
describes the first step of the implementation process: the construction and
decomposition of the Hamiltonian matrix that will be simulated using the
Hamiltonian simulation procedure introduced in \cref{sec:pf_impl_details}.

This appendix follows the analysis performed in~\cite{1711.05394v1} and adds
details and observations that will be refereed to in
\cref{sec:oracle_construction} when dealing with the actual oracle implementation.

\subsection{Hamiltonian matrix description}\label{sec:hamiltonian_descr}

In order to devise the Hamiltonian matrix that should be simulated to solve the
wave equation, the first step is to discretise \cref{eq:simplified_wave_eq}
with respect to space. Such a discretisation can be seen as a graph $G_{\delta
  x}$ whose vertices are the discretisation points and with edges between
nearest neighbour vertices. The graph $G_{\delta x}$ is depicted in 
\cref{fig:discretisation_graph}.

\begin{figure}
  \centering
  \includegraphics[width=\linewidth]{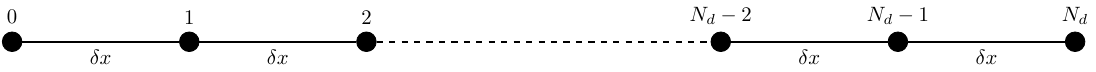}
  % \Description[Graph representing the 1 dimensional line discretised]{%
  %   Graph nodes are discretisation points, vertices are present between neighbouring nodes. Nodes are labelled from 0 to N from left to right. Each vertex has a length equal to the discretisation step.
  % }
  \caption{Graph $G_{\delta x}$ built from the discretisation of the
    $1$-dimensional line $[0, 1]$ with $N_d$ discretisation points (i.e.\ $\delta
    x = \frac{1}{N_d-1}$).}\label{fig:discretisation_graph}
\end{figure}

The graph Laplacian of $G_{\delta x}$, defined as
\begin{equation}
  {L(G_{\delta x})}_{i,j} :=
  \begin{cases}
    \deg(v_i) & \text{if } i = j \\
    -1 & \text{if } (i \neq j) \land (v_i \mbox{ adjacent to } v_j) \\
    0 & \text{otherwise} \\
  \end{cases}
  \label{eq:graph_laplacian}
\end{equation}
can then be used to approximate the differential operator
$\frac{\partial^2}{\partial x^2}$. By using the discretisation approximation
\begin{equation}
  \frac{\partial^2 \phi}{\partial x^2}(i \delta x, t) \approx
  \frac{\phi_{i-1,t} - 2 \phi_{i,t} + \phi_{i+1,t}}{{\delta x}^2}
  \label{eq:discr_approx}
\end{equation}
with $\phi_{i,t} = \phi(i\delta x, t)$, 
and approximating $\phi(x, t)$ with a vector $\phi = {\left[ \phi_{i,t}
  \right]}_{0 \leqslant i < N_d}$, the matrix
\begin{equation}
  \label{eq:discr_matrix}
  A = - \frac{1}{{\delta x}^2} L(G_{\delta x})
\end{equation}
approximates the second derivative of $\phi$ when $\delta x \to 0$ as
\begin{equation}
  {\left[ A\phi \right]}_i = \frac{\phi((i-1)\delta x, t) - 2 \phi(i\delta x, t) + \phi((i+1)\delta x, t)}{{\delta x}^2}.
\end{equation}

The approximation in \cref{eq:discr_matrix} is then used in
\cref{eq:simplified_wave_eq} to approximate the spatial derivative operator: 
\begin{equation}
  \label{eq:discr_wave_equation}
  \frac{\partial^2}{\partial t^2}\phi = -\frac{1}{{\delta x}^2} L(G_{\delta x}) \phi.
\end{equation}

Based on this formula,~\cite{1711.05394v1} shows that simulating
\begin{equation}
  \label{eq:H_definition}
  H = 
  \begin{pmatrix}
    0 & B \\
    B^\dagger & 0\\
  \end{pmatrix}
\end{equation}
with
\begin{equation}
  BB^\dagger = L(G_{\delta x})
  \label{eq:bbdagger_l}
\end{equation}
constructs a quantum circuit that will evolve a part of
the quantum state it is applied on according to the discretised wave equation in
\cref{eq:discr_wave_equation}.

A matrix $B$ satisfying \cref{eq:bbdagger_l} can be obtained directly from the
graph $G_{\delta x}$ representing the discretisation. The algorithm to construct the
matrix $B$ can be decomposed in three steps. First, the vertices (discretisation
points) should be arbitrarily ordered by assigning them a unique index in $[0, N_d-1]$.
Then, each edge of the graph is arbitrarily oriented and indexed with indices in $[0,
N_d-2]$. Finally, $B$ is computed with the following definition 
\begin{equation}
  \label{eq:B_definition}
  B_{ij} =
  \left\{
    \begin{array}{ll}
      1  & \text{if edge } j \text{ is a self-loop of vertex } i \text{,}\\
      1  & \text{if edge } j \text{ has vertex } i \text{ as source,}\\
      -1 & \text{if edge } j \text{ has vertex } i \text{ as sink,} \\
      0  & \text{otherwise} \\
    \end{array}
  \right..
\end{equation}

Note that edges' orientation and vertices/edges ordering is completely arbitrary.
Changing either the edges orientation on one of the orderings will change the
matrix $B$ but will not affect $BB^\dagger$ which should be equal to $L(G_{\delta x})$. This
freedom in the ordering and orientation choices takes a crucial importance in
the oracle implementation as it allows us to pick the ordering/orientation that
will produce an easy-to-implement matrix $B$.

\subsection{Dirichlet boundary conditions}\label{sec:dirichlet_boundary}

Fixing boundary conditions is a requirement for most of the partial differential
equations to admit a unique well-defined solution. There exist several boundary
conditions such as Neumann, Dirichlet, Robin or Cauchy ones. For simplicity, we
restricted ourselves to the study of \cref{eq:simplified_wave_eq} with
Dirichlet boundary condition of \cref{eq:dirichlet_bound_cond}.

In the case of Dirichlet boundary conditions on the $1$-dimensional line $[0,
1]$, the two boundary nodes at $x = 0$ and $x = 1$ can be ignored as their value
is always equal to $0$. Moreover,~\cite{1711.05394v1} shows that the graph
$G_{\delta x}^D$ representing the discretisation with Dirichlet boundary
conditions of \cref{eq:dirichlet_bound_cond} is simply $G_{\delta x}$ with
self-loops on the two outer nodes (i.e.\ the ones indexed $1$ and $N_d-2$ as $0$
and $N_d-1$ are ignored). $G_{\delta x}^D$ is depicted in 
\cref{fig:dirichlet_discr_graph}. The algorithm to construct the matrix $B$ remain
the same as explained in \cref{sec:hamiltonian_descr}.

\begin{figure}
  \centering
  \includegraphics[width=\linewidth]{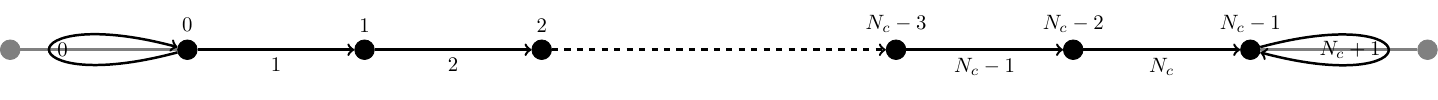}
  % \Description[Graph representing the 1 dimensional line discretised with Dirichlet boundary conditions]{%
  %   Graph nodes are discretisation points, vertices are present between neighbouring nodes. The two extremal nodes are ignored. Nodes are labelled from 0 to N - 2 from left to right. Each vertex has a length equal to the discretisation step. Two vertices from node 0 to node 0 and from node N - 2 to node N - 2 are added to represent Dirichlet boundary conditions.
  % }
  \caption{Graph $G_{\delta x}^D$ representing the discretisation of the
    1-dimensional line $[0,1]$ with Dirichlet boundary conditions. The points
    and edges in grey are only drawn for illustration purpose and are ignored in
    the analysis because the boundary condition impose a value of $0$ on these
    vertices. Loops are added to $G_{\delta x}$ to encode the fact that this
    graph represents Dirichlet boundary conditions. Vertices (resp.\ edges) are
    ordered with indices within $[0, N_c-1]$ (resp. $[0, N_c+1]$). $N_c$ is the
    number of considered points and is equal to $N_d - 2$ (the two extremal
    points are ignored).}\label{fig:dirichlet_discr_graph}
\end{figure}

\subsection{Matrices construction}\label{sec:matrices_construction}

All the pieces are now in place to start building the matrix
$B_{\text{d}} \in \mathbb{R}^{ (N_c-1) \times N_c}$. Using the definition of the
matrix $B$ written in \cref{eq:B_definition}
and the graph $G_{\delta x}^D$ depicted in 
\cref{fig:dirichlet_discr_graph} we end up with 

\begin{equation}
  B_{\text{d}} =
  \begin{pmatrix}%
    1      & 1      & 0      & \cdots & 0      \\%
    0      & -1     & 1      & \ddots & \vdots \\%
    \vdots & \ddots & \ddots & \ddots & 0      \\%
    0      & \cdots & 0      & -1     & 1      \\%
  \end{pmatrix}%
  \label{eq:B_matrix}
\end{equation}

We can easily check that $B_{\text{d}}B_{\text{d}}^\dagger$ is equal to the
well-known discretisation matrix
\begin{equation}
  B_{\text{d}}B_{\text{d}}^\dagger =
  \begin{pmatrix}%
    2      & -1     & 0      & \cdots & 0      \\%
    -1     & 2      & \ddots & \ddots & \vdots \\%
    0      & \ddots & \ddots & \ddots & 0      \\%
    \vdots & \ddots & \ddots & 2      & -1     \\%
    0      & \cdots & 0      & -1     & 2      \\%
  \end{pmatrix},%
  \label{eq:BBdag_matrix}
\end{equation}
which validate the method of construction of $B_{\text{d}}$.

Computing $\widetilde{H_{\text{d}}}$, the Hamiltonian matrix that should be simulated to
evolve the quantum state according to the wave equation in
\cref{eq:simplified_wave_eq} with Dirichlet boundary conditions, is now
straightforward. Using \cref{eq:H_definition}, we directly obtain

\begin{equation}
  \label{eq:H_matrix}
  \widetilde{H_d} = \frac{1}{\delta x}
  \begin{pmatrix}
    0      & \cdots & \cdots & 0      & 1      & 1      & 0      & \cdots & 0      \\%& 0      & \cdots & 0      \\%
    \vdots &        &        & \vdots & 0      & -1     & 1      & \ddots & \vdots \\%& \vdots &        & \vdots \\%
    \vdots &        &        & \vdots & \vdots & \ddots & \ddots & \ddots & 0      \\%& \vdots &        & \vdots \\%
    0      & \cdots & \cdots & 0      & 0      & \cdots & 0      & -1     & 1      \\%& \vdots &        & \vdots \\%
    1      & 0      & \cdots & 0      & 0      & \cdots & \cdots & \cdots & 0      \\%& \vdots &        & \vdots \\%
    1      & -1     & \ddots & \vdots & \vdots &        &        &        & \vdots \\%& \vdots &        & \vdots \\%
    0      & 1      & \ddots & 0      & \vdots &        &        &        & \vdots \\%& \vdots &        & \vdots \\%
    \vdots & \ddots & \ddots & -1     & \vdots &        &        &        & \vdots \\%& \vdots &        & \vdots \\%
    0      & \cdots & 0      & 1      & 0      & \cdots & \cdots & \cdots & 0      \\%& 0      & \cdots & 0      \\%
    % 0      & \cdots & \cdots & \cdots & \cdots & \cdots & \cdots & \cdots & 0      & 0      & \cdots & 0      \\%
    % \vdots &        &        &        &        &        &        &        & \vdots & \vdots &        & \vdots \\%
    % 0      & \cdots & \cdots & \cdots & \cdots & \cdots & \cdots & \cdots & 0      & 0      & \cdots & 0        %
  \end{pmatrix}
\end{equation}

% \subsection{Oracle implementation}\label{sec:oracle_impl}

As explained in \cref{sec:pf_impl_details}, the Hamiltonian simulation algorithm
implemented requires that the Hamiltonian to simulate is split as a sum
of $1$-sparse Hermitian matrices. There are a lot of valid decompositions for
the matrix $\widetilde{H_d}$ and we are free to choose the decomposition that will
simplify the most the oracle implementation or reduce the gate complexity.

We made the choice to decompose $B_{\text{d}}$ as two $1$-sparse matrices and
then reflect this decomposition on $\widetilde{H_d}$. Let $B_1$ and $B_{-1}$
defined as
\begin{equation}
  \label{eq:B1_definition}
  B_1 =
  \begin{pmatrix}%
    0      & 1      & 0      & \cdots & 0      \\%
    \vdots & \ddots & 1      & \ddots & \vdots \\%
    \vdots &        & \ddots & \ddots & 0      \\%
    0      & \cdots & \cdots & 0      & 1      \\%
  \end{pmatrix}%
\end{equation}
\begin{equation}
  \label{eq:Bm1_definition}
  B_{-1} =
  \begin{pmatrix}%
    1      & 0      & \cdots & \cdots & 0      \\%
    0      & -1     & \ddots &        & \vdots \\%
    \vdots & \ddots & \ddots & \ddots & \vdots \\%
    0      & \cdots & 0      & -1     & 0      \\%
  \end{pmatrix}%
\end{equation}
we have $B_{\text{d}} = B_1 + B_{-1}$. Let also
\begin{equation}
  \label{eq:Hi_tilde_definition}
  \widetilde{H_1} = \frac{1}{\delta x}
  \begin{pmatrix}
    0 & B_1 \\
    {B_1}^\dagger & 0\\
  \end{pmatrix}
  ,\quad
  \widetilde{H_{-1}} = \frac{1}{\delta x}
  \begin{pmatrix}
    0 & B_{-1} \\
    {B_{-1}}^\dagger & 0\\
  \end{pmatrix},
\end{equation}
it is easy to see that $\widetilde{H_d} = \widetilde{H_1} + \widetilde{H_{-1}}$
and that both $\widetilde{H_1}$ and $\widetilde{H_{-1}}$ are $1$-sparse Hermitian
matrices.

For convenience, we also define
\begin{equation}
  \label{eq:Hi_definition}
  H_1 = 
  \begin{pmatrix}
    0 & B_1 \\
    {B_1}^\dagger & 0\\
  \end{pmatrix}
  ,\quad
  H_{-1} =
  \begin{pmatrix}
    0 & B_{-1} \\
    {B_{-1}}^\dagger & 0\\
  \end{pmatrix},
\end{equation}
and $H_d = H_1 + H_{-1}$, the $\widetilde{H_1}$, $\widetilde{H_{-1}}$ and $\widetilde{H_d}$ matrices rescaled to contain only integer weights. These matrices have the interesting property that simulating  $\widetilde{H_d}$ (resp. $\widetilde{H_1}$, $\widetilde{H_{-1}}$) for a time $t$ is equivalent to simulating $H_d$ (resp. $H_1$, $H_{-1}$) for a time $\frac{t}{\delta x}$. This property will be used in the following sections as it offers us the opportunity to simulate the integer-weighted matrices  $H_d$, $H_1$ and $H_{-1}$ instead of the real-weighted ones $\widetilde{H_d}$, $\widetilde{H_1}$ and $\widetilde{H_{-1}}$.

Note also that a lower bound of the number of qubits needed to solve the wave equation for $N_d$
discretisation points can be computed from the dimensions of $H_d$. As
the non-empty upper-left block of matrix $H_d$ is of dimension $\left( 2N_d - 1
\right) \times \left( 2N_d - 1 \right)$, we need at least 
\begin{equation}
  \label{eq:qubit_needed_wave_eq}
  \left\lceil \log_2(2 N_d - 1) \right\rceil
\end{equation}
qubits to simulate it. This estimation does not take into account ancilla qubits that may be needed to implement the oracles.

\section{Oracle construction}\label{sec:oracle_construction}

Oracles can be seen as the interface between a quantum procedure and real-world data.
Their purpose is to encode classical data such that a quantum algorithm can process
it efficiently.

\subsection{Oracle interface}\label{sec:oracle_inout_description}

In order to work as a bridge between the classical and the quantum worlds and to be
used by the quantum algorithm, a clear interface for the oracle should be
established. 

We chose to use the interface described in~\cite[Eq. 4.4]{2004-ahokas-graeme-robert-improved-algorithms-for-approximate-quantum-fourier-transforms-and-sparse-hamiltonian-simulations}
with slight modifications improving the arity of the oracle for our specific case
of $1$-sparse matrices.

More precisely, our oracles $O$ implement the following interface
\begin{equation}
  \label{eq:oracle_interface}
  O \ket{x_0}_x \ket{0}_m \ket{0}_v \ket{0}_s = \ket{x_0}_x \ket{m(x_0)}_m \ket{v(x_0)}_v \ket{s(x_0)}_s
\end{equation}
with $\ket{x_0}_x$ encoding a row index as a unsigned integer, $m(x)$ the function that
returns the column index of the only non-zero element in row $x$,
$v(x) = \vert w(x) \vert$ the absolute value of the weight $w(x)$ of the first (and only)
non-zero element in row $x$ and
\begin{equation}
  \label{eq:oracle_sign_function}
  s(x) = \left\{
    \begin{split}
      0 & \text{ if } w(x) \geqslant 0 \\
      1 & \text{ else }
    \end{split}
  \right.
\end{equation}
the sign of the first non-zero entry in row $x$.

Note that the quantum register are labeled with their respective usage: $x$ for the index of of the row considered, $m$ for the index of the column considered, $v$ for the value of the element at $(\mathrm{row}, \mathrm{index})$ and $s$ for the sign of the element at $(\mathrm{row}, \mathrm{index})$. A fifth label ``$a$'' is used along the paper to label a register used as an ancilla. 

The interface of the oracle $O$ can also be obtained with 3 separate oracles
that will each take care of computing one output:
\begin{equation}
  \label{eq:M-oracle}
  M \ket{x_0}_x \ket{0}_m = \ket{x_0}_x \ket{m(x)}_m
\end{equation}
\begin{equation}
  \label{eq:V-oracle}
  V \ket{x_0}_x \ket{0}_v = \ket{x_0}_x \ket{v(x)}_v
\end{equation}
\begin{equation}
  \label{eq:S-oracle}
  S \ket{x_0}_x \ket{0}_s = \ket{x_0}_x \ket{s(x)}_s
\end{equation}

\subsection{Optimisation of $M$ and $S$}\label{sec:optimization-m-s}

\begin{claim}\label{claim:oracles-ignore-when-V-is-zero}
  The simulation algorithms provided by~\cite{2004-ahokas-graeme-robert-improved-algorithms-for-approximate-quantum-fourier-transforms-and-sparse-hamiltonian-simulations}
  have the interesting property that if the oracle $V$ encodes a weight of
  zero for some inputs (i.e.\ $v(x) = 0$ for some $x$) then the outputs of
  oracles $M$ and $S$ are ignored for those inputs.
\end{claim}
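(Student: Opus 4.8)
The plan is to trace the action of the explicit $1$-sparse simulation circuit of the referenced work on a computational basis state $\ket{x}$ and to show that this action reduces to the identity whenever $v(x) = 0$, in a way that is manifestly independent of the values returned by $M$ and $S$. First I would recall the structure of that circuit. On input $\ket{x}\ket{0}\ket{0}\ket{0}$ it calls the oracle to write the coupled column index $m(x)$, the magnitude $v(x)$ and the sign $s(x)$ into the ancilla registers; it then uses $m(x)$ to bring the pair $\{\ket{x},\ket{m(x)}\}$ into a fixed two-dimensional subspace, applies a rotation whose angle is proportional to $v(x)$ and whose direction is set by $s(x)$, and finally uncomputes both the index manipulation and the oracle calls. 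Schematically the circuit therefore has the conjugated form $U_m^{\dagger}\, R(v(x),s(x))\, U_m$, where $U_m$ denotes the reversible index manipulation controlled by $m(x)$ and $R$ is the weight-dependent rotation, with the oracle call/uncall absorbed into $U_m$ and its inverse.

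The key observation is that on the standard two-dimensional subspace the rotation acts as $\cos\!\left(v(x)t\right) I \mp i \sin\!\left(v(x)t\right) X$, the sign being selected by $s(x)$. Hence $R(0,s) = I$ for \emph{either} value of the sign, so that when $v(x)=0$ the middle factor collapses to the identity and the conjugation degenerates to $U_m^{\dagger} U_m = I$. The net transformation on $\ket{x}$ is then the identity, and the ancillas are returned to $\ket{0}$ by the uncomputation regardless of which values $m(x)$ and $s(x)$ had been written into them. In other words, the output of the simulation on such an $\ket{x}$ carries no dependence on $m(x)$ or $s(x)$, which is precisely the assertion that these oracle outputs are ignored.

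The main obstacle is to make precise that the uncomputation cancels cleanly even when the intermediate rotation is trivial: one must verify, step by step against the circuit of the referenced work, that every occurrence of $m(x)$ and $s(x)$ sits \emph{only} inside such a conjugation around the $v(x)$-rotation, so that no residual dependence can leak into the index register or the ancillas once $R$ degenerates. This is slightly tedious bookkeeping, but it is guaranteed to go through by the following consistency check: a weight of zero means row $x$ of the $1$-sparse Hermitian matrix is identically zero, so $\ket{x}$ lies in $\ker H$ and $e^{-iHt}\ket{x}=\ket{x}$; since the algorithm simulates $H$ \emph{exactly} (it is ``easy to simulate'' in the sense of the definition above), its output on $\ket{x}$ must be $\ket{x}$, forcing any apparent dependence on $m(x)$ and $s(x)$ to cancel.
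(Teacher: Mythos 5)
Your argument is essentially the paper's own proof: the paper likewise traces the explicit circuit from the referenced thesis and observes that the central weight-controlled gate $e^{-iZ\otimes Ft}$ (written out in matrix form for the $1$-bit-weight case) acts as the identity whenever the weight register is \ket{0}, so that the oracle call $O$ and the intermediate operation $A$ are undone by $O^\dagger$ and $A^\dagger$, leaving no dependence on $m(x)$ or $s(x)$. The differences are only presentational --- the paper carries out your ``tedious bookkeeping'' concretely by exhibiting the matrix of the central gate, while you phrase it abstractly as a conjugation $U_m^{\dagger}\,R(v,s)\,U_m$ with $R(0,s)=I$ and append a kernel-based sanity check --- so the approach is the same.
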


\begin{proof}
  
  The circuit simulating a $1$-sparse $m$-bit-integer weighted Hamiltonian $H$
  depicted in \cref{fig:sparse-ham-ahokas-circuit} is taken from~\cite{2004-ahokas-graeme-robert-improved-algorithms-for-approximate-quantum-fourier-transforms-and-sparse-hamiltonian-simulations}.

  \begin{figure}
    \centering
    \includegraphics[width=.8\linewidth]{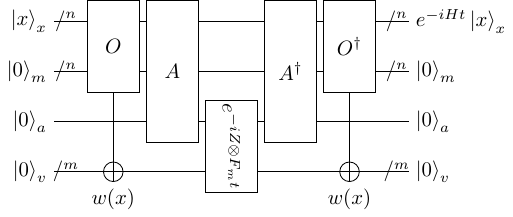}
    % \Description{Visual representation of a Hamiltonian simulation quantum circuit}
    \caption{Quantum circuit re-created from~\cite[p.~71]{2004-ahokas-graeme-robert-improved-algorithms-for-approximate-quantum-fourier-transforms-and-sparse-hamiltonian-simulations}
      that simulates a 1-sparse integer-weighted Hamiltonian for a given time $t$.
      $O$ is the implementation of the oracle, $A$ is a quantum circuit defined in~\cite[p.~70]{2004-ahokas-graeme-robert-improved-algorithms-for-approximate-quantum-fourier-transforms-and-sparse-hamiltonian-simulations}.
      $F_m$ is defined as the diagonal matrix with diagonal entries increasing from $0$ to $2^m-1$ (see \cref{eq:matrix-F-def}).
    }\label{fig:sparse-ham-ahokas-circuit}
  \end{figure}
  
  In our special case of $1$-bit weights (i.e.\ $m = 1$), the third quantum gate
  $e^{-i Z \otimes F_m t}$ can be written as
  \begin{equation}
    \label{eq:matrix-form-exp-ZFt}
    \begin{split}
      e^{-i Z \otimes F_m t}
      &= e^{-i Z \otimes F_1 t} \\
      &= \exp\left[ -i 
        \begin{pmatrix}
          F_1 & 0 \\
          0 & -F_1\\
        \end{pmatrix}
        t \right] \\
      &= \begin{pmatrix}
        e^{-iF_1t} & 0 \\
        0 & e^{iF_1t}\\
      \end{pmatrix} \\
      &=
      \begin{pmatrix}
        1 & 0 & 0 & 0 \\
        0 & e^{-it} & 0 & 0 \\
        0 & 0 & 1 & 0 \\
        0 & 0 & 0 & e^{-it} \\
      \end{pmatrix}.
    \end{split}
  \end{equation}

  where 

  \begin{equation}
    \label{eq:matrix-F-def}
    F_m =
    \begin{pmatrix}
      0      & 0      & \cdots & \cdots & \cdots & \cdots & 0      \\
      0      & 1      & \ddots &        &        &        & \vdots \\
      \vdots & \ddots & 2      & \ddots &        &        & \vdots \\
      \vdots &        & \ddots & 3      & \ddots &        & \vdots \\
      \vdots &        &        & \ddots & 4      & \ddots & \vdots \\
      \vdots &        &        &        & \ddots & \ddots & 0      \\
      0      & \cdots & \cdots & \cdots & \cdots & 0      & 2^m-1  \\
    \end{pmatrix}.
  \end{equation}
  
  It follows from the matrix notation that if the second qubit $e^{-iZ\otimes F_1 t}$ is
  applied on is in the state \ket{0}, the gate $e^{-iZ\otimes F_1 t}$ is the identity
  transformation, i.e.\ the unitary operation $e^{-iZ\otimes F_1 t}$ sends \ket{00} (resp. \ket{10})
  to \ket{00} (resp. \ket{10}).
  This means that if the oracle $O$ does not set the last qubit to \ket{1} (i.e.\ the oracle encodes a weight of
  $0$ for the $x$\textsuperscript{th} row of $H$), the quantum circuit
  depicted in \cref{fig:sparse-ham-ahokas-circuit} can be simplified up to an identity
  transformation as the effects of $O$ (resp. $A$) are reverted by $O^\dagger$ (resp. $A^\dagger$).

  Rephrasing, if the $x$\textsuperscript{th} row of matrix $H$ has no non-zero entries, the effects of the oracle $O$ is ignored, which implies that the effects of the oracles $M$ and $S$ that compose $O$ are also ignored.
\end{proof}

Using the result of Claim~\cref{claim:oracles-ignore-when-V-is-zero}, we are free to
implement any transformation that best suits us for the set of inputs $\ket{x}$ such that
the $x$\textsuperscript{th} row of the considered Hermitian matrix ($H_1$ or $H_{-1}$) has no non-zero
elements as long as the oracle $V$ implements the right transformation.

To illustrate clearly the \emph{implemented} transformations we chose to encode with
$M$ and $S$, the next sections will re-write the matrices $H_1$ and $H_{-1}$ according
to \cref{eq:Hi_definition} but with one $\mbf{0}$ or $\mbf{-0}$ in each empty
row. A $\mbf{0}$ entry at position $(i,j)$ in the matrix means that the row $i$ was
empty, the oracle $M$ will map $\ket{i}_x$ to $\ket{j}_m$ and the oracle $S$ will encode a
positive sign, i.e.\ $\ket{0}_s$. The same reasoning applies for $\mbf{-0}$ entries, except
that the encoded sign is now negative, i.e.\ $\ket{1}_s$. 

The following sections will explain step by step the construction of each
of the three oracles $M$, $V$ and $S$, both for the matrix $H_1$ ($M_1$, $V_1$ and $S_1$)
and the matrix $H_{-1}$ ($M_{-1}$, $V_{-1}$ and $S_{-1}$).

\subsection{About arithmetic and logic quantum gates}\label{sec:about-arithm-logic}

Implementing the oracles $M$, $V$ and $S$ for the matrices $H_1$ and $H_{-1}$ requires several
arithmetic and logic quantum gates such as \texttt{or}, \texttt{add} or \texttt{compare}.

All these gates have been implemented prior to the oracle implementation and the implementation
steps are detailed in this section.

\subsubsection{\label{sec:or-gate}The \texttt{or} gate}

The \texttt{or} gate is easily implemented using only \texttt{X} and \texttt{CCX} (or Toffoli) gates.
The implementation used is depicted in \cref{fig:or-gate-implementation} and uses
the famous Boole algebra formula linking \texttt{not}, \texttt{or} and \texttt{and}:
$x \vee y = \neg(\neg x \wedge \neg y)$.

\begin{figure}
  \centering
  \resizebox{\linewidth}{!}{
    \circEqual{%
      \includegraphics{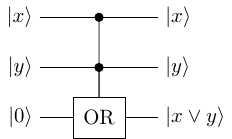}
    }{%
      \includegraphics{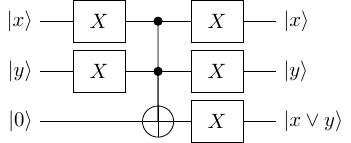}
    }
  }
  %\Description{Implementation of a quantum OR gate using only X and CNOT gates}
  \caption{Implementation of the \texttt{or} gate.}\label{fig:or-gate-implementation}
\end{figure}

\subsubsection{\label{sec:add-sub-gates}The \texttt{add} and \texttt{sub} gates}

Most of the research papers presenting an implementation of the \texttt{add} or \texttt{sub}
gates only consider the case where the two numbers to add or subtract are stored in
quantum registers.

In our case, the oracles implementation requires an adder and subtractor that can add or
subtract to a quantum register a quantity known when the quantum circuit is generated, i.e.\ not necessarily encoded on a quantum state.

\begin{claim}\label{claim:subtractor-from-adder}
  Implementing a subtractor is trivial once an adder procedure is available.
\end{claim}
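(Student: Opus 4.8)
The plan is to exploit the fact that, on a finite $n$-qubit register, subtraction of a classically known constant is nothing more than addition of its additive inverse, together with the fact that the adder is a reversible (unitary) operation whose inverse is directly available.

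First I would fix the convention that all arithmetic on the $n$-qubit register is performed modulo $2^n$, so that the adder $A_b$ realises the unitary map $\ket{x} \mapsto \ket{(x+b) \bmod 2^n}$ for a constant $b$ that is known at circuit-generation time. Because this map merely permutes the computational basis states, $A_b$ is unitary and hence invertible. The key step is then to observe that the desired subtractor $S_b \colon \ket{x} \mapsto \ket{(x-b)\bmod 2^n}$ is precisely $A_b^{-1} = A_b^\dagger$. Since any circuit built from the gate set of Equation \eqref{eq:qat_gateset} is inverted by reversing the order of its gates and replacing each gate by its adjoint --- and every gate in that set has a trivially obtained adjoint, the gates $H$, $X$, $CNOT$ and $CCNOT$ being self-inverse while $R_y(\theta)$, $P_h(\theta)$ and $CP_h(\theta)$ invert by negating their angle --- the subtractor circuit is produced mechanically from the adder circuit at no additional design cost.

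Alternatively, to avoid even the circuit-reversal bookkeeping, I would invoke the two's-complement identity $x - b \equiv x + (2^n - b) \pmod{2^n}$: since $b$ is classical, so is $2^n - b$, and one simply precomputes this constant and feeds it to the existing adder. Either route establishes the claim, and I would present the circuit-inversion argument as the cleanest justification of the word ``trivial''.

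There is essentially no substantive obstacle here; the only point requiring care --- and the one I would flag explicitly --- is the modular-arithmetic convention. Specifically, I must verify that the adder already wraps around modulo $2^n$ (discarding any final carry out of the most significant qubit) so that $A_b^\dagger$ coincides with subtraction modulo $2^n$, rather than with a signed subtraction that could underflow. Once this convention is pinned down, the equivalence $S_b = A_b^\dagger$ holds exactly and the claim follows immediately.
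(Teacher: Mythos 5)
Your proof is correct, but it takes a genuinely different route from the paper's. The paper proves this claim via the bitwise-complement identity $a - b = (a' + b)'$ of Equation \eqref{eq:bitwise-sub-to-add}, i.e.\ it conjugates the \emph{unmodified} adder by $X$ gates on the register holding $a$ (complement $a$, add $b$, complement the result), costing one adder call plus $2n$ extra gates. A notable feature of that construction, stressed in the caption of Figure \ref{fig:sub-gate-implementation-from-add}, is that $b$ is never negated, so the identical circuit works whether $b$ sits in a quantum register or is a generation-time constant compiled into the adder; the same identity is then reused in Section \ref{sec:cmp-gate}, where the comparator computes the high bit of $a-b$ by converting the subtraction into an addition and invoking the \texttt{CARRY} gate. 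Your first route---taking the subtractor to be $A_b^\dagger$, obtained by reversing the adder circuit and replacing each gate by its adjoint---is equally valid and in fact slightly cheaper (zero additional gates; for the modified Draper adder it amounts to negating the phase angles), and it also covers the register-encoded case, since the adjoint of $\ket{a}\ket{b}\mapsto\ket{(a+b)\bmod 2^n}\ket{b}$ is exactly the two-register subtractor. Your second route, $x - b \equiv x + (2^n - b) \pmod{2^n}$, is cheapest of all but applies only when $b$ is a generation-time constant. You were right to flag the wrap-around convention as the one point needing care: the paper's adder does wrap modulo $2^n$ on overflow, as stated in the captions of Figures \ref{fig:original-drapper-adder} and \ref{fig:modified-drapper-adder}, so your identification of the subtractor with $A_b^\dagger$ holds exactly. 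The only thing your argument forfeits relative to the paper's is the complement identity itself, which the paper deliberately introduces here because it is recycled later in the comparator construction.
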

\begin{proof}
  A subtractor can be implemented from a generic adder by using the identity
  \begin{equation}
    \label{eq:bitwise-sub-to-add}
    a - b = (a' + b)'
  \end{equation}
  where $'$ denotes the bitwise complementation.

  The circuit resulting of the application of this identity is depicted in
  \cref{fig:sub-gate-implementation-from-add} and only requires one call
  to the adder and $2n$ additional gates, $n$ being the number of qubits used to
  represent one of the operands.

  \begin{figure}
    \centering
    \resizebox{\linewidth}{!}{
      \circEqual{%
        \includegraphics{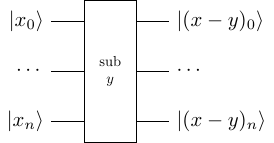}
      }{%
        \includegraphics{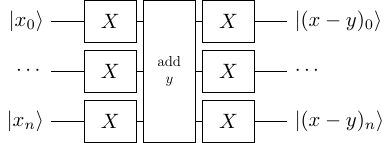}
      }
    }
    %\Description{Implementation of a quantum subtractor using X gates and one call to a quantum adder}
    \caption{Implementation of the \texttt{sub} gate from an \texttt{add} gate. The $y$ value encoding is intentionally omitted. The subtractor will use the same encoding as the adder (i.e.\ either the $y$ value is encoded on a quantum register or it is encoded directly in the quantum circuit implementing the adder). Note that the $y$ value is not negated.}\label{fig:sub-gate-implementation-from-add}
  \end{figure}

\end{proof}

\begin{note}
  Following Claim~\cref{claim:subtractor-from-adder} we will restrict the study to
  implementing an adder. Implementing a subtractor is trivial and cheap in term
  of additional quantum gates used once an adder is available.
\end{note}

\begin{definition}{\textit{Generation-time value}}
  A generation-time value is a value that is known by the programmer when generating the
  quantum circuit. Knowing a value at generation-time may allow to optimise even further
  the generated quantum circuit. The closest analog in classical programming would be
  C-like macros or recent C++ constexpr expressions.
\end{definition}

The easiest solution to overcome the problem caused by the non-compatible input formats between
our problem (with a generation-time value) and the existing adders (with two values encoded on quantum registers)
is to encode the quantity known at generation-time into ancillary qubits and then use the
regular adder algorithms to add to a quantum register the value encoded in a second quantum register.
Even if this solution is trivial to implement, it has the huge downside of requiring \bigO{\log_2 b}
additional ancillary qubits to temporarily store the generation-time value $b$.

Another answer to the problem would be to adapt a quantum adder originally devised
to add two quantum registers to a quantum adder capable of adding a constant value
to a quantum register. Several adders~\cite{1712.02630v1,quant-ph0410184v1,quant-ph0008033v1}
have been studied to check if they can be modified to allow a generation-time input,
i.e.\ if it possible to remove completely the quantum register storing the right-hand-side
(or left-hand-side) of the addition.

% \begin{note}
%   The restriction to the right-hand-side of the operation is not relevant for the adder as
%   the addition is symmetric but it becomes relevant as soon as we plan to use
%   the adder to implement a subtractor because subtraction is not a symmetric operation.
%   Taking this into account, we will restrict our study to the right-hand side of the operation.
% \end{note}

The task of removing the quantum register storing one of the operands
appears to be challenging for adders based on classical arithmetic like~\cite{quant-ph0410184v1,1712.02630v1} but trivial for Draper's quantum adder
introduced in~\cite{quant-ph0008033v1}.

% Moreover, the quantum adder presented in~\cite{1611.07995v2} has already been designed to work
% with a generation-time constant.

\begin{claim}
  Draper's quantum adder can be adapted into an efficient adder that takes as right-hand side input
  a unsigned ``generation-time'' integer value and add this value to a sufficiently large quantum
  register encoding another unsigned integer.
\end{claim}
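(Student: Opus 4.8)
The plan is to start from the Fourier-basis (QFT) adder of Draper and observe that, once the right-hand operand is fixed at generation time, every control in that circuit is classically determined and can be eliminated. First I would recall the structure of Draper's adder acting on an $n$-qubit register holding $a = \sum_j a_j 2^j$: one applies the quantum Fourier transform, which sends $\ket{a}$ to a product state in which the $j$-th qubit carries the relative phase $e^{2\pi i a / 2^{j}}$; adding a second integer $b$ amounts to imprinting the extra phase $e^{2\pi i b / 2^{j}}$ on each qubit $j$; and a final inverse Fourier transform returns the computational-basis state $\ket{(a + b) \bmod 2^n}$. In the original two-register version the phase $e^{2\pi i b / 2^{j}}$ is realised as a cascade of $CP_h$ gates, each controlled by one bit of the register storing $b$.

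The key observation is that when $b$ is a generation-time value the bits $b_k$ are known constants, so each $CP_h$ is either the identity (when its control bit $b_k = 0$) or an uncontrolled $P_h$ (when $b_k = 1$). Consequently I would drop the entire register that used to store $b$, together with all the controls, and merge the surviving rotations acting on a common qubit into a single precomputed phase gate $P_h(\theta_j)$ with
\begin{equation*}
  \theta_j = \frac{2\pi b}{2^{j}} \pmod{2\pi},
\end{equation*}
the angle being computed classically at circuit-generation time. The resulting circuit consists of one QFT, exactly $n$ single-qubit phase gates, and one inverse QFT, and therefore needs \emph{no} ancillary register for the second operand, removing the \bigO{\log n} ancillas of the naive ``encode $b$ into qubits'' approach.

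To finish I would verify correctness by tracking the Fourier phases explicitly: after the QFT qubit $j$ holds the phase $e^{2\pi i a / 2^{j}}$, the gate $P_h(\theta_j)$ multiplies it by $e^{2\pi i b / 2^{j}}$, giving the phase encoding of $a + b$, and the inverse QFT then produces $\ket{(a+b) \bmod 2^n}$. The ``sufficiently large register'' hypothesis is exactly the condition $a + b < 2^n$, under which the modular sum coincides with the ordinary sum and no overflow occurs. I expect the main obstacle to be bookkeeping rather than conceptual: pinning down the QFT indexing convention so that the merged angle $\theta_j$ is the correct Fourier coefficient of $b$, and confirming that summing the individual $CP_h$ angles over the set $\{\,k : b_k = 1\,\}$ reproduces exactly $2\pi b / 2^{j}$ modulo $2\pi$. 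The gate-count and ancilla accounting then follow immediately from the QFT cost of \bigO{n^2} (or \bigO{n \log n} with an approximate transform) plus the $n$ added phase gates.
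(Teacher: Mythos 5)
Your proposal is correct and follows essentially the same route as the paper: since the qubits of \ket{b} serve only as controls of the phase gates in Draper's adder, a generation-time $b$ lets each controlled-phase gate be replaced by a plain phase gate (bit $1$) or the identity (bit $0$), after which the \ket{b} register is unused and can be removed. Your additional step of classically merging the surviving rotations on each qubit into a single $P_h(\theta_j)$ is a mild refinement the paper does not carry out (its modified adder keeps the individual $b_i$-conditioned phase gates), but it changes neither the argument nor the overall \bigO{n^2} cost, which remains dominated by the QFT.
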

\begin{proof}

  The original Drapper's adder as introduced in~\cite{quant-ph0008033v1} is illustrated in
  \cref{fig:original-drapper-adder}.
  
  \begin{figure}
    \centering
    \includegraphics[width=.8\linewidth]{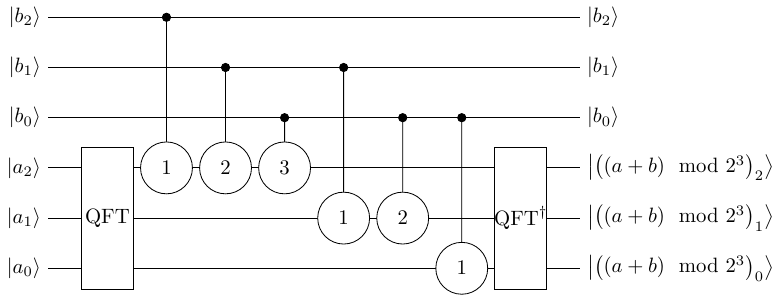}
    %\Description[Visual representation of the Drapper adder.]{The Drapper adder apply a Quantum Fourier Transform, several controlled rotations and an inverse Quantum Fourier Transform to perform an addition between two quantum registers.}
    \caption{Original Draper's adder example for $3$-qubit registers \ket{a} and \ket{b}. The round gates between the two applications of the Quantum Fourier Transform (\texttt{QFT} gates) are controlled phase gates and are defined in~\cite{quant-ph0008033v1}. Note that the adder wraps on overflow, meaning that if an overflow happens, the result will be $(a+b) \mod 2^3$.
    }\label{fig:original-drapper-adder}
  \end{figure}

  The only quantum gates using the quantum register \ket{b} are the controlled-phase gates.
  Moreover, they only use the qubits of the right-hand-side register \ket{b} as controls.

  In the case of a constant value of $b$ known at generation time, we can replace each
  controlled-phase gate by either a phase gate if the corresponding bit of $b$ is $1$ or
  by an identity gate (or a ``no-op'' gate) if the bit of $b$ is $0$. Once this transformation
  has been performed, the quantum register \ket{b} is no longer used and can be safely removed
  from the circuit. 
\end{proof}

The final quantum \texttt{add} gate implementation is depicted in \cref{fig:modified-drapper-adder},
requires \bigO{n^2} gates and has a depth of \bigO{n}. Following~\cite{quant-ph9601018v1,quant-ph0008033v1,quant-ph0006004v1},
the asymptotic gate count can be improved to \bigO{n\log(n)} by removing the rotation with an angle below
a given threshold that depend on hardware noise.

\begin{figure}
  \centering
  \includegraphics[width=.8\linewidth]{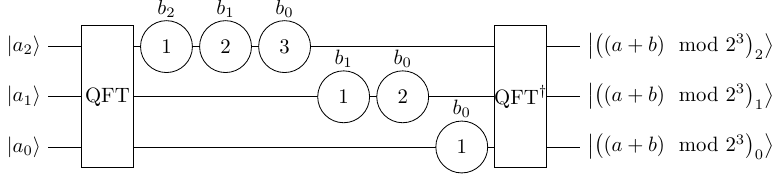}
  %\Description[Visual representation of the compile-time version of the Drapper adder]{When the right-hand side of the addition is known at compile-time, the controlled rotations can be changed to simple rotations, removing the control part.}
  \caption{Modified Draper's adder example for $3$-qubit register \ket{a} and $3$-bit classical constant $b$. The round gates between the two applications of the Quantum Fourier Transform (\texttt{QFT} gates) are phase gates and are defined in~\cite{quant-ph0008033v1}. A label $b_i$ above a phase gate means that the phase gate should only be applied when the $i$\textsuperscript{th} bit of $b$ is set to $1$. Note that the adder wraps on overflow, meaning that if an overflow happens, the result will
    be $(a+b) \mod 2^3$.
  }\label{fig:modified-drapper-adder}
\end{figure}

% We are left with 2 possible implementations: the Draper adder~\cite{quant-ph0008033v1} using the
% quantum Fourier transform and an arithmetic adder~\cite{1611.07995v2}.

\subsubsection{\label{sec:cmp-gate}The \texttt{cmp} gate}

For the same reasons exposed in the adder implementation in \cref{sec:add-sub-gates},
the \texttt{cmp} gate cannot be implemented using the arithmetic comparator presented
in~\cite{quant-ph0410184v1} because removing the right-hand side qubits seems to be
a challenging task.

Instead, we use the idea from~\cite[Section 4.3]{quant-ph0410184v1} that explain how to
implement a comparator only by using a quantum adder. The comparison algorithm works by
computing the high-bit of the expression $a - b$. If this high-bit is in the state \ket{1}
then $a < b$.

In order to compute the high-bit of $a - b$, several options are open. The two most promising options are described in the following paragraphs.

The first option is to use a subtractor acting on $n+1$ qubits and behaving nicely on underflows (i.e.\ underflows result in cycling to the highest-value), as illustrated in
\cref{fig:high-bit-compute-subtractor}. 
This approach requires $2$ calls to the subtractor and $1$ additional $2$-qubit quantum
gate.

\begin{figure}
  \centering
  \includegraphics[width=.7\linewidth]{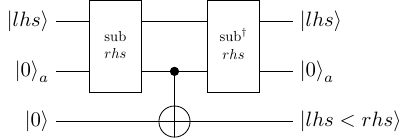}
  %\Description{Computation of the high-bit of lhs - rhs using a quantum subtractor routine}
  \caption{
    Computation of the high-bit of $lhs - rhs$ with a $(n+1)$-qubit subtractor. The second quantum register
    is an ancilla qubit that is appended to the quantum register storing \ket{\mathrm{lhs}} in order to form
    a $(n+1)$-qubit register. The result is stored in a third quantum register as \ket{1} if $lhs < rhs$, else
    \ket{0}.
  }\label{fig:high-bit-compute-subtractor}
\end{figure}

Another solution would be to use \cref{eq:bitwise-sub-to-add} to change the subtraction into
an addition and then use a specialised procedure to compute the high-bit of the addition of
two numbers $a$ and $b$ ($a$ being encoded on a quantum register and $b$ a constant).
Computing the high-bit of an addition between a quantum register and a constant can be performed
with the \texttt{CARRY} gate introduced in~\cite{1611.07995v2}. This approach requires \bigO{n} Toffoli, \texttt{CNOT} and \texttt{X} gates.

% First, we could implement a subtractor with the adder of \cref{sec:add-sub-gates}
% according to Claim~\cref{claim:subtractor-from-adder} and use it to compute the high-bit. This
% solution would require $2$ calls to the gate \texttt{add}: one to compute the subtraction and
% another one to uncompute it once the high-bit of $a-b$ has been isolated.

% A second possible solution would be to use the identity of Claim~\cref{claim:subtractor-from-adder}
% and the adder implemented in \cref{sec:add-sub-gates}.
% Implementing this algorithm would require also $2$ calls to the
% adder as well as \bigO{n} additional single-qubit quantum gates to compute the bitwise
% complementation.

% The third solution is to use the identity presented in \cref{eq:bitwise-sub-to-add}
% alongside a specialised procedure that computes only the high-bit of the addition
% (instead of the full result of the addition). Such a procedure can be found in
% ~\cite{1611.07995v2} as the \texttt{CARRY} gate and requires \bigO{n} gates.

Each of the described methods has its advantages and drawbacks.

For example, the first method crucially relies on a quantum subtractor, and will have the same properties as the subtractor used. In our specific case, we use the subtractor implemented with Drapper's adder~\cite{quant-ph0008033v1} as explained in \cref{sec:add-sub-gates}, which in turn uses the quantum Fourier transform. The main disadvantage of using the QFT when looking at practical implementation on quantum hardware is that the QFT involves phase gates with exponentially small angles. These gates may be implemented correctly up to a given threshold, but very small rotation angles will inevitably not be as precise as \textit{normal} rotation angles due to the hardware limitations in precision. This problem can be circumvented by using an approximate QFT algorithm~\cite{quant-ph9601018v1,quant-ph0006004v1} that will cut all the rotation gates that have a rotation angle smaller than a given threshold from the generated circuit but the algorithm will not be exact anymore (small probability of incorrect result).

On the other hand, the \texttt{CARRY} gate involves only \texttt{X},
controlled-\texttt{X} and Toffoli gates. This restriction makes this implementation
more robust than the first one to hardware approximations. Another difference is the
connectivity needed by the approaches: the first method relies on a adder implemented with the
quantum Fourier transform, which use an all-to-all connectity whereas the \texttt{CARRY} gate, once
the qubits correctly ordered, only contains gates on adjacent qubits. As a side note, the exclusive
use of logical gates \texttt{X}, controlled-\texttt{X} and Toffoli may allow us to
simulate efficiently the \texttt{CARRY} gate on classical hardware as it only involves classical
arithmetic.

As a last word, in the future, the QFT may be implemented directly into the
hardware chips to make it more efficient because it is one of the most used
quantum procedure (and so one of the best candidate for optimisation). Taking
this possibility into account seems a little premature right now but may have a high
impact on the efficiency and precision of the first solution presented.

After summarising all the drawbacks and advantages, we decided to use the
arithmetic comparator for its linear number of gates, because it is based on
arithmetic which does not involve exponentially small rotation angles and because
the need to have $n-1$ dirty qubits to lend to the procedure is not an issue in
our implementation.

\subsubsection{\label{sec:eq-gate}The \texttt{eq} gate}

The last gate the oracle implementation will need is an \texttt{eq} gate, testing the
equality between an integer stored in a quantum register and a generation-time constant
integer.

This gate has been implemented with a multi-controlled Tofolli gate and a few \texttt{X} gates
before and after the control qubits of the Toffoli gates that should be equal to \ket{0}.
The \texttt{X} gates are necessary because a raw Toffoli gate set its target qubit only when all
its control are in the state \ket{1}, but we want each control qubit to be equal to a specific
bit of the generation-time constant integer, which can be either \ket{0} or \ket{1}.

An implementation example is available in \cref{fig:eq-implementation}.

\begin{figure}
  \centering
  \resizebox{.7\linewidth}{!}{
    \circEqual{%
      \includegraphics{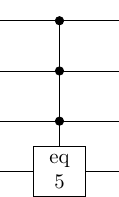}
    }{%
      \includegraphics{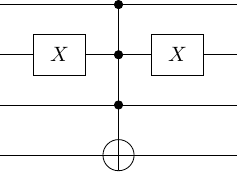}
    }
  }
  %\Description{Implementation of a quantum equal gate using only X gates and a multiply-controlled NOT.}
  \caption{Example of \texttt{eq} gate implementation for the compile-time value $5$. \texttt{X} gates are applied to the second
    control qubit because the only bit set to $0$ in the big-endian binary representation
    of $5 = 101_2$ is at the second (middle) position.}\label{fig:eq-implementation}
\end{figure}

Implementing a \texttt{NOT} gate controlled by $n$ qubits can be done with only one ancilla qubit
or $n-2$ garbage qubits and requires \bigO{n} \texttt{X}, controlled-\texttt{X} or Toffoli gates~\cite{multi.controll.toffoli}.

\subsection{Oracles for $H_1$}\label{sec:oracles-h_1}

As noted in \cref{sec:optimization-m-s}, the oracles $M_1$ and $S_1$ can be
optimized by using the fact that they can encode anything for $\ket{x}_x$ when the
$x$\textsuperscript{th} row of $H_1$ is empty.

We decided to use this optimization opportunity to add regularity to the description
of the $H_1$ matrix. The \emph{implemented} matrix $H_1$, denoted as $H_1^{\text{impl}}$, is described
in \cref{eq:h1i-definition}.

\begin{note}
  All indices start at $0$. The first row of a matrix has the index $0$, the second
  row the index $1$ and so on.
  This convention is used to match Python's indexing that starts at $0$.
\end{note}

\newcommand{\Bmat}{%
  \begin{matrix}%
    1      & 1      & 0      & \cdots & 0      \\%
    0      & -1     & 1      & \ddots & \vdots \\%
    \vdots & \ddots & \ddots & \ddots & \vdots \\%
    % 0      & \cdots & 0      & -1     & 1      \\%
    \cdots & \cdots & \cdots & \cdots & \vdots \\%
  \end{matrix}%
}
\newcommand{\BTmatrix}{%
  \begin{matrix}%
    \vdots & 0      & \cdots & 0      \\%
    \vdots & -1     & \ddots & \vdots \\%
    \vdots & 1      & \ddots & 0      \\%
    \vdots & \ddots & \ddots & -1     \\%
    % 0      & \cdots & 0      & 1      \\%
    0      & \cdots & \cdots & \cdots \\%
  \end{matrix}%
}%
\newcommand{\RightTopZeroMatrix}{%
  \begin{matrix}%
    0      & \cdots & \cdots & \cdots & \cdots & 0      \\%
    \vdots &        &        &        &        & \vdots \\%
    \vdots &        &        &        &        & \vdots \\%
    \vdots &        &        &        &        & \vdots \\%
  \end{matrix}%
}%
\newcommand{\BotLeftZeroMatrix}{%
  \begin{matrix}%
    0      & \cdots & \cdots & \cdots \\%
    \vdots &        &        &        \\%
    \vdots &        &        &        \\%
    \vdots &        &        &        \\%
    \vdots &        &        &        \\%
    0      & \cdots & \cdots & \cdots \\%
  \end{matrix}%
}%

% \begin{widetext}
\begin{equation}
  \begin{split}
    H_1^{\text{impl}} =
    \begin{array}{r}
      {
      \setlength{\tabcolsep}{0pt} % Default value: 6pt
      \begin{array}{rrr}
        \overbrace{\hphantom{\BTmatrix}\hspace{.5\tabcolsep}}^{N_c} & \overbrace{\hphantom{\Bmat}\hspace{.5\tabcolsep}}^{N_c+1} & {\overbrace{\hphantom{\RightTopZeroMatrix}\hspace{.5\tabcolsep}}^{2^q-\left(2N_c + 1\right)}\hspace{3pt}}\\
      \end{array}
      }
      \\[0pt]
      \begin{array}{r}
        \rotatebox[origin=c]{90}{$N_c$}                          \left\{\vphantom{\Bmat}\right.  \\[5pt]
        \rotatebox[origin=c]{90}{$N_c + 1$}                      \left\{\vphantom{\BTmatrix}\right. \\[5pt]
        \rotatebox[origin=c]{90}{$2^q - \left(2N_c+1\right)$}    \left\{\vphantom{\BotLeftZeroMatrix}\right.\\
      \end{array}
      \begin{pmatrix}
        0      & \cdots & \cdots & 0      & 0      & 1      & 0      & \cdots & 0      & 0      & \cdots & \cdots & \cdots & \cdots & 0      \\%
        \vdots &        &        & \vdots & \vdots & \ddots & \ddots & \ddots & \vdots & \vdots &        &        &        &        & \vdots \\%
        \vdots &        &        & \vdots & \vdots &        & \ddots & \ddots & 0      & \vdots &        &        &        &        & \vdots \\%
        0      & \cdots & \cdots & 0      & 0      & \cdots & \cdots & 0      & 1      & 0      &        &        &        &        & \vdots \\%
        0      & \cdots & \cdots & 0      & 0      & \cdots & \cdots & \cdots & 0      & \mbf{0}&        &        &        &        & \vdots \\%
        1      & \ddots &        & \vdots & \vdots &        &        &        & \vdots & 0      &        &        &        &        & \vdots \\%
        0      & \ddots & \ddots & \vdots & \vdots &        &        &        & \vdots & \vdots &        &        &        &        & \vdots \\%
        \vdots & \ddots & \ddots & 0      & \vdots &        &        &        & \vdots & \vdots &        &        &        &        & \vdots \\%
        0      & \cdots & 0      & 1      & 0      & \cdots & \cdots & \cdots & 0      & 0      & \cdots & \cdots & \cdots & \cdots & 0      \\%
        0      & \cdots & \cdots & 0      & \mbf{0}& 0      & \cdots & \cdots & 0      & 0      & \cdots & \cdots & \cdots & \cdots & 0      \\%
        \vdots &        &        &        & \ddots & \mbf{0}& \ddots &        & \vdots & \vdots &        &        &        &        & \vdots \\%
        \vdots &        &        &        &        & \ddots & \mddots& \ddots & \vdots & \vdots &        &        &        &        & \vdots \\%
        \vdots &        &        &        &        &        & \ddots & \mbf{0}& 0      & 0      &        &        &        &        & \vdots \\%
        \vdots &        &        &        &        &        &        & \ddots & \mbf{0}& 0      & \ddots &        &        &        & \vdots \\%
        0      & \cdots & \cdots & \cdots & \cdots & \cdots & \cdots & \cdots & 0      & \mbf{0}& 0      & 0      & \cdots & \cdots & 0      \\%
      \end{pmatrix}
    \end{array}
  \end{split}\label{eq:h1i-definition}
\end{equation}
% \end{widetext}

According to the shape of the matrix in \cref{eq:h1i-definition}, the oracle $M_1$
should implement the transformation
\begin{equation}
  \label{eq:M1-transformation}
  M_{1}\vert x \rangle_x \vert 0 \rangle_m \mapsto
  \begin{cases}
    \vert x \rangle_x \otimes \vert x + (N_c + 1) \rangle_m & \text{if } x < (N_c + 1) \\
    \vert x \rangle_x \otimes \vert x - (N_c + 1) \rangle_m & \text{else} \\
  \end{cases}.
\end{equation}

$M_1$ can be easily implemented with the quantum circuit depicted in 
\cref{fig:M1-implementation-quantum-circuit}.

\begin{figure}
  \centering
  \includegraphics[width=\linewidth]{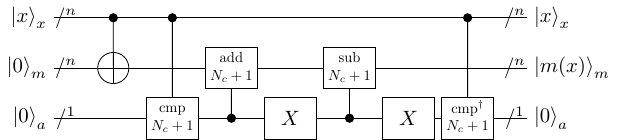}
  %\Description{Visual representation of the oracle to encode the position of non-zero elements in the first matrix in the decomposition}
  \caption{Implementation of the oracle $M_1$. The \texttt{cmp} gate compare
    the value of the control quantum register (interpreted as a unsigned integer)
    with the parameter given (written below the \texttt{cmp}). If the control
    register is strictly lower than the parameter, the gate set the qubit it is applied
    on to \ket{1}. The \texttt{add} (resp.\ \texttt{sub}) gate used in this quantum circuit
    add (resp.\ subtract) the value of its parameter to (resp.\ from) the quantum register
    it is applied on only if the control qubit is in the state \ket{1}.}\label{fig:M1-implementation-quantum-circuit}
\end{figure}

The oracle $V$ cannot be simplified using the results from
Claim~\cref{claim:oracles-ignore-when-V-is-zero}. It should implement
the transformation written in \cref{eq:V1-transformation}.

\begin{equation}
  \label{eq:V1-transformation}
  V_{1}\vert x \rangle_x \vert 0 \rangle_v \mapsto
  \begin{cases}
    \vert x \rangle_x \vert 1 \rangle_v & \text{if } (x < 2N_c + 1) \land (x \neq N_c) \\
    \vert x \rangle_x \vert 0 \rangle_v & \text{else} \\
  \end{cases}.
\end{equation}

The implementation of the oracle $V_1$ is depicted in \cref{fig:V1-implementation-quantum-circuit}. The first part, \textit{Set}, sets the weight qubit to $1$ for all $\ket{x}_x$ such that $x < 2N_c+1$. As this does not correspond to the correct expression of $V$, the second part \textit{Correct} is here to set the weight register back to $\ket{0}_v$ when $x == N_c$. 

\begin{figure}
  \centering
  \includegraphics[width=\linewidth]{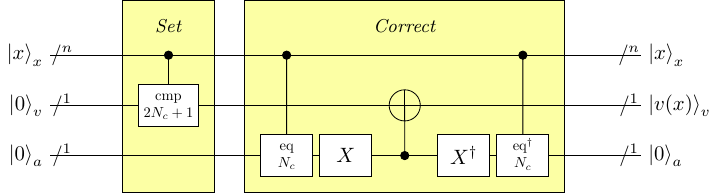}
  %\Description{Visual representation of the oracle to encode the weights of non-zero elements in the first matrix in the decomposition}
  \caption{Implementation of the oracle $V_1$. The \texttt{cmp} gate compare
    the value of the control quantum register (interpreted as a unsigned integer)
    with the parameter given (written below the \texttt{cmp}). If the control
    register is strictly lower than the parameter, the gate set the qubit it is applied
    on to \ket{1}. The \texttt{eq} gate used in this quantum circuit sets its target qubit to \ket{1}
    if the value of its parameter is equal to the value encoded on the quantum register controling
    the gate.
  }\label{fig:V1-implementation-quantum-circuit}
\end{figure}

The last oracle left to implement in order to be able to simulate $H_1$ is $S_1$, the
oracle encoding the signs of the non-zero entries of $H_1$.
The convention used to encode the sign of an entry has been taken from~\cite{2004-ahokas-graeme-robert-improved-algorithms-for-approximate-quantum-fourier-transforms-and-sparse-hamiltonian-simulations}
and is: a positive sign is encoded as $\ket{0}_s$, a negative sign is encoded as $\ket{1}_s$. 
As shown in \cref{eq:h1i-definition}, $H_1$ only contains positive non-zero
entries so the sign oracle $S_1$ should implement the simple transformation of
\cref{eq:S1-transformation}: the identity.

\begin{equation}
  \label{eq:S1-transformation}
  S_{1}\vert x \rangle_x \vert 0 \rangle_s \mapsto \vert x \rangle_x \vert 0 \rangle_s
\end{equation}

\subsection{Oracles for $H_{-1}$}\label{sec:oracles-h_-1}

The matrix $H_{-1}$ has less regularity than $H_1$, which will lead to a more complex
implementation. The \emph{implemented} matrix $H_{-1}$, denoted as $H_{-1}^{\text{impl}}$,
is described in \cref{eq:h-1i-definition}.

\newcommand{\BotRightZeroMatrix}{%
  \begin{matrix}%
    0      & \cdots & \cdots & \cdots & \cdots & 0      \\%
    \vdots &        &        &        &        & \vdots \\%
    \vdots &        &        &        &        & \vdots \\%
    0      &        &        &        &        & \vdots \\%
    \mbf{0}& \ddots &        &        &        & \vdots \\%
    0      &-\mbf{0}& 0      & \cdots & \cdots & 0      \\%
    
  \end{matrix}%
}%
\newcommand{\BotMidZeroMatrix}{%
  \begin{matrix}%
    0      &-\mbf{0}& \ddots &        & \vdots \\
    & \ddots & \ddots & \ddots & \vdots \\
    &        & \ddots & \ddots & 0      \\
    &        &        & \ddots &-\mbf{0}\\
    &        &        &        & \ddots \\
    \cdots & \cdots & \cdots & \cdots & \cdots \\
  \end{matrix}%
}%

% \begin{widetext}
\begin{equation}
  \begin{split}
    H_{-1}^{\text{impl}} =
    \begin{array}{r}
      {
      \setlength{\tabcolsep}{0pt} % Default value: 6pt
      \begin{array}{rrr}
        \overbrace{\hphantom{\BTmatrix}\hspace{1ex}}^{N_c} & \overbrace{\hphantom{\BotMidZeroMatrix}\hspace{.5ex}}^{N_c+1} & {\overbrace{\hphantom{\BotRightZeroMatrix\hspace{2ex}}}^{2^q-\left(2N_c + 1\right)}\hspace{3pt}}\\
      \end{array}
      }
      \\[0pt]
      \begin{array}{r}
        \rotatebox[origin=c]{90}{$N_c$}                          \left\{\vphantom{\Bmat}\right.  \\[5pt]
        \rotatebox[origin=c]{90}{$N_c + 1$}                      \left\{\vphantom{\BTmatrix}\right. \\[5pt]
        \rotatebox[origin=c]{90}{$2^q - \left(2N_c+1\right)$}    \left\{\vphantom{\BotMidZeroMatrix}\right.\\
      \end{array}
      \begin{pmatrix}%
        0      & \cdots & \cdots & 0      & 1      & 0      & \cdots & \cdots & 0      & 0      & \cdots & \cdots & \cdots & \cdots & 0      \\%
        \vdots &        &        & \vdots & 0      & -1     & \ddots &        & \vdots & \vdots &        &        &        &        & \vdots \\%
        \vdots &        &        & \vdots & \vdots & \ddots & \ddots & \ddots & \vdots & \vdots &        &        &        &        & \vdots \\%
        0      & \cdots & \cdots & 0      & 0      & \cdots & 0      & -1     & 0      & \vdots &        &        &        &        & \vdots \\%
        1      & 0      & \cdots & 0      & 0      & \cdots & \cdots & \cdots & 0      & \vdots &        &        &        &        & \vdots \\%
        0      & -1     & \ddots & \vdots & \vdots &        &        &        & \vdots & \vdots &        &        &        &        & \vdots \\%
        \vdots & \ddots & \ddots & 0      & \vdots &        &        &        & \vdots & \vdots &        &        &        &        & \vdots \\%
        \vdots &        & \ddots & -1     & 0      &        &        &        & \vdots & \vdots &        &        &        &        & \vdots \\%
        0      & \cdots & \cdots & 0      &-\mbf{0}& \ddots &        &        & \vdots & 0      & \cdots & \cdots & \cdots & \cdots & 0      \\%
        0      & \cdots & \cdots & \cdots & 0      &-\mbf{0}& \ddots &        & \vdots & 0      & \cdots & \cdots & \cdots & \cdots & 0      \\%
        \vdots &        &        &        &        & \ddots & \ddots & \ddots & \vdots & \vdots &        &        &        &        & \vdots \\%
        \vdots &        &        &        &        &        & \ddots & \ddots & 0      & \vdots &        &        &        &        & \vdots \\%
        \vdots &        &        &        &        &        &        & \ddots &-\mbf{0}& 0      &        &        &        &        & \vdots \\%
        \vdots &        &        &        &        &        &        &        & \ddots &-\mbf{0}& \ddots &        &        &        & \vdots \\%
        0      & \cdots & \cdots & \cdots & \cdots & \cdots & \cdots & \cdots & \cdots & 0      &-\mbf{0}& 0      & \cdots & \cdots & 0      \\%
      \end{pmatrix}
    \end{array}
  \end{split}\label{eq:h-1i-definition}
\end{equation}
% \end{widetext}

Following the placement of the non-zero and the \mbf{0} or $-\mbf{0}$ entries in the matrix
$H_{-1}^{\text{impl}}$ of \cref{eq:h-1i-definition}, the oracle $M_{-1}$ should implement
the transformation
\begin{equation}
  \label{eq:m-1-transformation}
  M_{-1}\vert x \rangle_x \vert 0 \rangle_m \mapsto
  \begin{cases}
    \vert x \rangle_x \vert x + N_c \rangle_m & \text{if } x < N_c \\
    \vert x \rangle_x \vert x - N_c \rangle_m & \text{else} \\
  \end{cases}.
\end{equation}

This transformation is quite similar to the one implemented by the oracle $M_1$ in
\cref{eq:M1-transformation}: $N_c + 1$ from the transformation $M_1$ has
been replaced by $N_c$ in the transformation $M_{-1}$. Thanks to this similarity,
the implementation of $M_{-1}$ will be a nearly-exact copy of the implementation of
$M_1$. The full implementation of the $M_{-1}$ oracle is depicted in 
\cref{fig:M-1-implementation-quantum-circuit}.

\begin{figure}
  \centering
  \includegraphics[width=\linewidth]{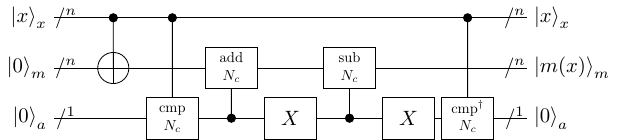}
  %\Description{Visual representation of the oracle to encode the position of non-zero elements in the second matrix in the decomposition}
  \caption{Implementation of the oracle $M_{-1}$. The \texttt{cmp} gate compare
    the value of the control quantum register (interpreted as a unsigned integer)
    with the parameter given (written below the \texttt{cmp}). If the control
    register is strictly lower than the parameter, the gate set the qubit it is applied
    on to \ket{1}. The \texttt{add} (resp. \texttt{sub}) gate used in this quantum circuit
    add (resp.\ subtract) the value of its parameter to (resp.\ from) the quantum register
    it is applied on only if the control qubit is in the state \ket{1}.}\label{fig:M-1-implementation-quantum-circuit}
\end{figure}

The weight oracle $V_{-1}$ is the simplest to implement for the matrix $H_{-1}$, even if
it cannot take advantage of the optimisation discussed in
Claim~\cref{claim:oracles-ignore-when-V-is-zero}.
The transformation that should be implemented by the oracle $V_{-1}$ is shown in
\cref{eq:V-1-transformation}.

\begin{equation}
  \label{eq:V-1-transformation}
  V_{-1}\vert x \rangle_x \vert 0 \rangle_v \mapsto
  \begin{cases}
    \vert x \rangle_x \vert 1 \rangle_v & \text{if } x < 2N_c \\
    \vert x \rangle_x \vert 0 \rangle_v & \text{else} \\
  \end{cases}.
\end{equation}

The implementation of the weight oracle $V_{-1}$ is illustrated in 
\cref{fig:V-1-implementation-quantum-circuit}.

\begin{figure}
  \centering
  \includegraphics[width=.4\linewidth]{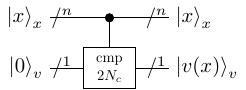}
  %\Description{Visual representation of the oracle to encode the weights of non-zero elements in the second matrix in the decomposition}
  \caption{Implementation of the oracle $V_{-1}$. The \texttt{cmp} gate compare
    the value of the control quantum register (interpreted as a unsigned integer)
    with the parameter given (written below the \texttt{cmp}). If the control
    register is strictly lower than the parameter, the gate set the qubit it is applied
    on to \ket{1}.}\label{fig:V-1-implementation-quantum-circuit}
\end{figure}

The last oracle left to implement is $S_{-1}$, the sign oracle. Due to the sign irregularity
in the matrix $H_{-1}^{\text{impl}}$, the implementation of $S_{-1}$ is more involved and requires several ancillary
qubits.
According to the shape of the matrix $H_{-1}^{\text{impl}}$, the sign oracle $S_{-1}$ should implement the
transformation defined in \cref{eq:S-1-transformation}.

\begin{equation}
  \label{eq:S-1-transformation}
  S_{-1}\vert x \rangle_x \vert 0 \rangle_s \mapsto
  \begin{cases}
    \vert x \rangle_x \vert 0 \rangle_s & \text{if } (x = 0) \lor (x = N_c)\\
    \vert x \rangle_x \vert 1 \rangle_s & \text{else} \\
  \end{cases}.
\end{equation}

An implementation of the oracle $S_{-1}$ is illustrated in \cref{fig:S-1-implementation-quantum-circuit}. 
\begin{figure}
  \centering
  \includegraphics[width=.8\linewidth]{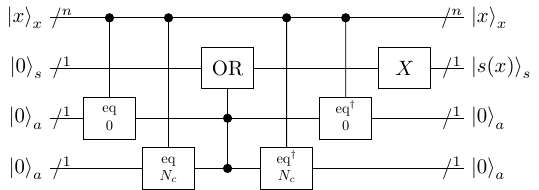}
  %\Description{Visual representation of the oracle to encode the signs of non-zero elements in the second matrix in the decomposition}
  \caption{Implementation of the oracle $S_{-1}$. The \texttt{eq} gate used in this quantum circuit is presented in \cref{fig:eq-implementation} and test if the value encoded in its control qubits is equal to the compile-time value given. The \texttt{OR} gate flips the target qubits if and only if at least one of the two control qubits is in the state \ket{1}.}\label{fig:S-1-implementation-quantum-circuit}
\end{figure}

\section{Implementation of higher-order Laplacians}\label{sec:impl-high-order}

The results shown in this paper have all been generated using the second-order discretisation formula shown in \cref{eq:discr_approx}. Higher-order formulas are studied in~\cite[VI, VII and C]{1711.05394v1}.

\begin{note}
  As shown in~\cite[VII.B]{1711.05394v1}, higher-order discretisations with Neumann boundary conditions are not implementable using the algorithm described in \cref{sec:herm_matrix_construction_decomposition}.
\end{note}

In this appendix we replace the second-order formula given in \cref{eq:discr_approx} and used all along the paper by the fourth-order formula given in~\cite[Eq. (46)]{1711.05394v1} and re-written below

\begin{equation}
  \label{eq:fourth-order-laplacian}
  \frac{\partial^2 \phi}{\partial x^2}(i \delta x, t) \approx \frac{1}{{\delta x}^2} \left( \frac{5}{2} \phi_{i,t} - \frac{4}{3} \left(\phi_{i-1,t} + \phi_{i+1,t}\right) + \frac{1}{12} \left( \phi_{i-2,t} + \phi_{i+2,t} \right) \right).
\end{equation}

We are left to devise the matrix $B_d^4$ that satisfy $B_d^4 {B_d^4}^\dagger = \Delta_4$ where $\Delta_4$ is the discretisation matrix arising from the fourth-order finite-differences approximation in~\cref{eq:fourth-order-laplacian}.

\cite[Eq. (47) and VII.C]{1711.05394v1} devised an analytic formula for $\hat{B}_d^4$, the $B_d^4$ matrix with periodic boundary conditions, using the matrix $\hat{S}$ representing the cyclic permutation $\{1, 2, \dots, N\}$ with entries $\hat{S}_{i,j} = \delta_{i, (j+1 \mod N)}$ as shown in \cref{eq:S-permutation-matrix}.

\begin{equation}
  \label{eq:S-permutation-matrix}
  \hat{S} =
  \begin{pmatrix}
    0      & 0      & \cdots & \cdots & \cdots & 0      & 1      \\
    1      & 0      &        &        &        &        & 0      \\
    0      & \ddots & \ddots &        &        &        & \vdots \\
    \vdots & \ddots & \ddots & \ddots &        &        & \vdots \\
    \vdots &        & \ddots & \ddots & \ddots &        & \vdots \\
    \vdots &        &        & \ddots & \ddots & \ddots & \vdots \\
    0      & \cdots & \cdots & \cdots & 0      & 1      & 0      \\
  \end{pmatrix}
\end{equation}

With this definition of $\hat{S}$, the analytic formula for $\hat{B}_d^4$ is given in \cref{eq:Bd4-expression}, with $b$ and $c$ being solution of~\cite[Eqs. (53,54,55)]{1711.05394v1}. The exact values for $b$ and $c$ are:
\begin{align}
  b = \pm \frac{1}{2\sqrt{3}\sqrt{7 \pm 4 \sqrt{3}}}\\
  c = \pm \sqrt{\frac{7 \pm 4\sqrt{3}}{12}}
\end{align}
with the $\pm$ signs that can be chosen freely. Note that, $b = \pm \frac{1}{2\sqrt{3}\sqrt{7 \pm 4\sqrt{3}}}$ and $c = \frac{1}{12b}$ are irrational because $\sqrt{3}\sqrt{7 \pm 4\sqrt{3}} = \sqrt{3}\sqrt{2 + \sqrt{3}}^2 = \sqrt{3}\left(2 + \sqrt{3}\right) = 2\sqrt{3} + 3$ is irrational.

\cref{eq:Bd4-approx-matrix} shows the matrix shape with its entries.
\begin{align}
  \hat{B}_d^4 &= c \hat{S} - (b + c) * \mathbb{I} + b \hat{S}^\dagger \label{eq:Bd4-expression} \\
              &\approx
                \begin{pmatrix}
                  b + c  & b      & 0      & \cdots & \cdots & \cdots & 0      & c      \\
                  c      & b + c  & b      & \ddots &        &        &        & 0      \\
                  0      & c      & \ddots & \ddots & \ddots &        &        & \vdots \\
                  \vdots & \ddots & \ddots & \ddots & \ddots & \ddots &        & \vdots \\
                  \vdots &        & \ddots & \ddots & \ddots & \ddots & \ddots & \vdots \\
                  \vdots &        &        & \ddots & \ddots & \ddots & b      & 0      \\
                  0      &        &        &        & \ddots & c      & b + c  & b      \\
                  b      & 0      & \cdots & \cdots & \cdots & 0      & c      & b + c  \\
                \end{pmatrix} \label{eq:Bd4-approx-matrix}
\end{align}

Because periodicity has not been studied in the main use-case of this paper, we would like to also remove the need of periodic boundary conditions in this higher-order laplacian discretisation. This can be achieved by removing the upper-right entry of $\hat{S}$ by changing it from $1$ to $0$. The resulting matrix $S$ is shown in \cref{eq:S-permutation-matrix-no-periodic}.

\begin{equation}
  \label{eq:S-permutation-matrix-no-periodic}
  S =
  \begin{pmatrix}
    0      & 0      & \cdots & \cdots & \cdots & \cdots & 0      \\
    1      & 0      &        &        &        &        & \vdots \\
    0      & \ddots & \ddots &        &        &        & \vdots \\
    \vdots & \ddots & \ddots & \ddots &        &        & \vdots \\
    \vdots &        & \ddots & \ddots & \ddots &        & \vdots \\
    \vdots &        &        & \ddots & \ddots & \ddots & \vdots \\
    0      & \cdots & \cdots & \cdots & 0      & 1      & 0      \\
  \end{pmatrix}
\end{equation}

Using the exact same formula we can devise $B_d^4$:
\begin{align}
  B_d^4 &= c S - (b + c) * \mathbb{I} + b S^\dagger \label{eq:Bd4-expression-no-periodicity} \\
        &\approx
          \begin{pmatrix}
            b + c  & b      & 0      & \cdots & \cdots & \cdots & \cdots & 0      \\
            c      & b + c  & b      & \ddots &        &        &        & \vdots \\
            0      & c      & \ddots & \ddots & \ddots &        &        & \vdots \\
            \vdots & \ddots & \ddots & \ddots & \ddots & \ddots &        & \vdots \\
            \vdots &        & \ddots & \ddots & \ddots & \ddots & \ddots & \vdots \\
            \vdots &        &        & \ddots & \ddots & \ddots & b      & 0      \\
            \vdots &        &        &        & \ddots & c      & b + c  & b      \\
            0      & \cdots & \cdots & \cdots & \cdots & 0      & c      & b + c  \\
          \end{pmatrix} \label{eq:Bd4-approx-matrix-no-periodicity}
\end{align}

Replacing $B_d^4$ in \cref{eq:H_definition} we obtain
\begin{equation}
  \label{eq:H_matrix_higher_order}
  \widetilde{H_d^4} = \frac{1}{\delta x}
  \begin{pmatrix}
    0      & \cdots & \cdots & \cdots & 0      & b + c  & b      & 0      & \cdots & 0      \\
    \vdots &        &        &        & \vdots & c      & b + c  & b      & \ddots & \vdots \\
    \vdots &        &        &        & \vdots & 0      & c      & \ddots & \ddots & 0      \\
    \vdots &        &        &        & \vdots & \vdots & \ddots & \ddots & b + c  & b      \\
    0      & \cdots & \cdots & \cdots & 0      & 0      & \cdots & 0      & c      & b + c  \\
    b + c  & c      & 0      & \cdots & 0      & 0      & \cdots & \cdots & \cdots & 0      \\
    b      & b + c  & \ddots & \ddots & \vdots & \vdots &        &        &        & \vdots \\
    0      & \ddots & \ddots & \ddots & 0      & \vdots &        &        &        & \vdots \\
    \vdots & \ddots & \ddots & b + c  & c      & \vdots &        &        &        & \vdots \\
    0      & \cdots & 0      & b      & b + c  & 0      & \cdots & \cdots & \cdots & 0      \\
  \end{pmatrix}.
\end{equation}

One of the main difference with the second-order approximation used all along this paper is that, with the fourth-order approximation, the entries of the matrix $\widetilde{H_d^4}$ are no longer multiples of a common number $\alpha \in \mathbb{R}$. This means that we cannot write down $\widetilde{H_d^4}$ as an integer weighted matrix multiplied by a real number, and so the trick used in \cref{sec:matrices_construction} to simulate the integer weighted matrix $H_d$ for a time $\alpha t$ is no longer applicable.

Consequently, and independently of the decomposition we use for $\hat{H_d^4}$, at least one of the matrices in the decomposition of $\hat{H_d^4}$ will not be ``easy to simulate'' as defined in Definition \cref{def:easy-to-simulate}.

Ultimately, the main consequence of this observation is that we will have to use a real-weighted hamiltonian simulation procedure. Such a procedure can be found in~\cite{2004-ahokas-graeme-robert-improved-algorithms-for-approximate-quantum-fourier-transforms-and-sparse-hamiltonian-simulations} but requires to approximate the real-weighted entries with a fixed-point representation that has at least 2 evident caveats:

\begin{enumerate}
\item It is impossible to encode the irrational numbers $b$ and $c$ exactly with a fixed-point representation. This means that we add another layer of approximation, even before the approximation caused by the use of a product-formula.
\item The hamiltonian simulation procedures used for real numbers requires more qubits. More precisely, the number of additional qubits required depends on the desired precision $\epsilon$ and grows as $\log_2\left(\frac{1}{\epsilon}\right)$. 
\end{enumerate}

\begin{note}
  Even if the $H_d^4$ matrix seems quite hard to simulate, it is still a $3$-sparse matrix. This means that it is still managable to hand-write the oracles. Moreover, having a small number of matrices in the decomposition helps in reducing the error introduced by product-formulas.
\end{note}

\section{Optimisation of the implementation}\label{sec:optim-impl}

Once the correctness of the implementation validated, one of the most important remaining work is to try to optimise the implementation. The optimisation of a software is often performed as an iterative task.

The first step is to define a figure of merit, a quantity we want to minimise during the optimisation process. Among the most obvious figures of merit are the total number of gates, the number of CNOT gates or the total execution time of the quantum program. More complex quantities can also be considered, such as the execution time using error correction codes or the final state fidelity. In this paper we decided to take into account an estimation of the total execution time of the quantum program on an imaginary device that shares today's chips characteristics.

The second step of the optimisation process consists in isolating the subroutines that contribute the most to the figure of merit. As an example, if the quantum program spend $90\%$ of the total execution time in one subroutine, this subroutine should be the first place to look for optimisations.

After the isolation of one or two subroutines, the actual optimisation can take place. The goal of this third step is to decrease the impact of the subroutines considered on the overall figure of merit without changing the final result of the implementation.

Finally, once the optimisation is performed, the optimisation process can be repeated by re-starting at the second step, until the program is considered sufficiently optimised.

One of the main difficulty we encountered when applying this optimisation process was to correctly isolate the most time-consumming subroutines. In classical computing, this step is usually performed with specialised tools such as \texttt{gprof} or a more advanced profiler, but no such tool exist for quantum programs. In order to fill this gap we developped \texttt{qprof}, a tool that analyses a quantum program and generates a report similar to the one generated by \texttt{gprof}. Using \texttt{qprof} and some of the various tools compatible with \texttt{gprof}, we plotted the call-graph shown in \cref{fig:qft-based-adder}.

From this call-graph, it is clear that the adder is \emph{the} most costly subroutine and that it should be optimised. The adder internally uses the Quantum Fourier Transform (QFT), which takes more than $50\%$ of the total execution time. The issue is that the QFT implementation is already very concise and we do not expect to be able to optimise it enough to cut significantly its overall cost. This leads us to the conclusion that a new algorithm that do not require the QFT should be used to implement an adder. Such an algorithm can be found in~\cite{Vedral_1996}.

Changing the implementation of the adder from Draper's adder to the arithmetic-based adder from~\cite{Vedral_1996} improves drastically the total execution time of the quantum program and produce the  call-graph in \cref{fig:arith-based-adder}.

\begin{figure}[h]
  \centering
  \subfigure[%
  {Call graph of the quantum wave equation solver using Draper's adder (QFT-based).}%
  ]{
    \includegraphics[width=.4\linewidth]{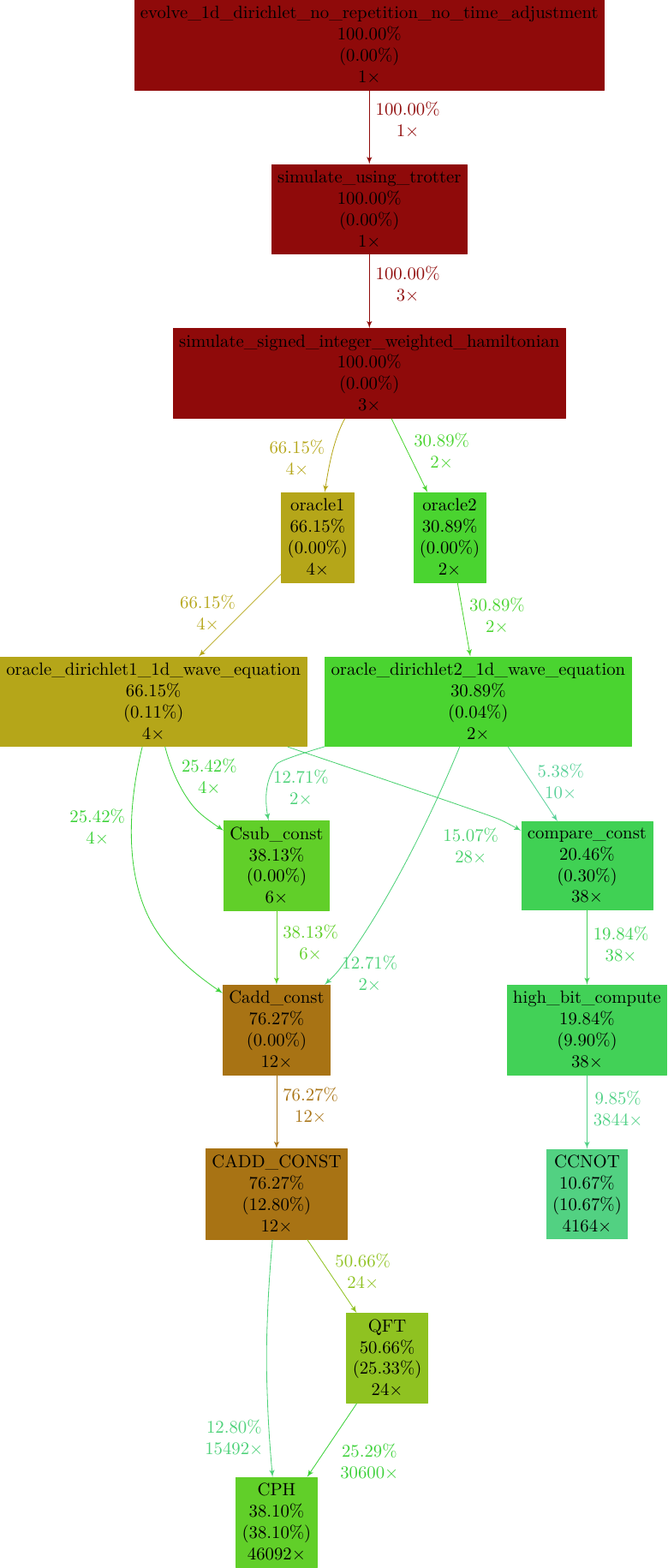}%
    \label{fig:qft-based-adder}
  }\hspace{.03\linewidth}
  \subfigure[%
  {Call graph of the quantum wave equation solver using an arithmetic-based adder.}%
  ]{
    \includegraphics[width=.45\linewidth]{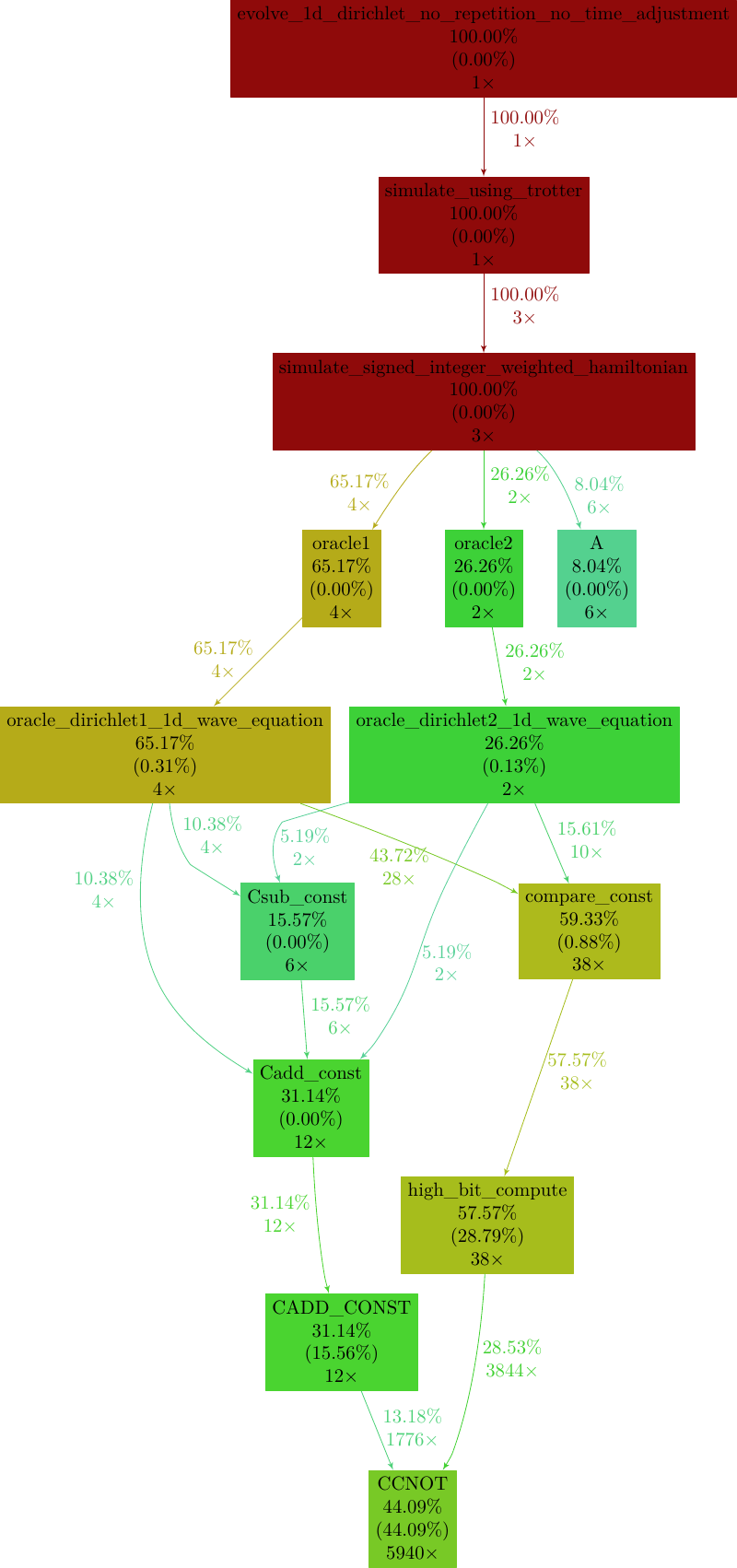}%
    \label{fig:arith-based-adder}
  }
  \caption{All the gates or subroutines that account for less than $5\%$ of the total execution time are not displayed. Execution times for $u1$, $u2$, $u3$ and $cx$ gates have been averaged over all the data available for the quantum chip IBMQ Melbourne. Using the arithmetic-based adder, the overall execution time improved by a factor of $31$.}\label{fig:call-graphs}
\end{figure}

\section{Gate count analysis}
\label{sec:precise-gate-count}

\subsection{Precise subroutines gate counts}
\label{sec:prec-subr-gate}

\cref{tab:gate_counts} summarise the gate count and ancilla qubit requirements for all the major subroutines used in the wave equation solver implementation. Using the entries of this table, it is
possible to compute an estimation of the number of gates required to solve the wave equation. 

As explained in \cref{sec:pf_impl_details}, we need to simulate each of the $1$-sparse Hamiltonians in the decomposition. Aggregating the estimates in \cref{tab:gate_counts} we obtain the costs in \cref{tab:HS_gate_counts} for the Hamiltonian simulation part. Note that the cost of the adder has been voluntarily omited from the computations in order to be able to compare the cost with different adder implementations. Let $a$ be the gate cost of the adder implementation chosen, the cost of simulating the Hamiltonian needed to solve the $1$-dimensional wave equation is: $82n - 35 + 12a_{\mathtt{Toffoli}}$ Toffoli gates, $84n + 21 + 9\left[ 0, n-1 \right] + 12a_{\mathtt{CNOT}}$ \texttt{CNOT} gates and $\bigO{n} + 12a_{1-\mathtt{qubit}}$ $1$-qubit gates.

Choosing an adder implementation and simplifying the gate counts by omitting negligible terms we obtain the gate counts summarised in \cref{tab:final_HS_gate_counts}. It is interesting to note that even if the arithmetic-based adder adds huge constants in the gate count, it does not change the asymptotic complexity whereas Draper's adder changes the number of \texttt{CNOT} gates required from \bigO{n} to \bigO{n^2}.

\begin{table}[h]
  \centering
  \resizebox{\linewidth}{!}{%
    \begin{tabular}{|l|c|c|c|c|m{2.1cm}|}
      \hline
      Gate & Toffoli count & \texttt{CNOT} count & $1$-qubit gate count & \# ancillas & notes\\
      \hline
      \hline
      \texttt{or} & 1 & 0 & 5 & 0 & \\
      \hline
      \texttt{QFT} & 0 & $3\left(2n^2 - 2n + \lfloor \frac{n}{2} \rfloor\right)$&%
                                                                                  \begin{tabular}{@{}c@{}}
                                                                                    $2\left( n^2 + n \right)$ \texttt{$H$} \\
                                                                                    $4\left( n^2 - n \right)$ \texttt{$T$} \\
                                                                                    $\frac{n^2 - n}{2}$ \texttt{$R_n$}
                                                                                  \end{tabular} & 1 \ket{0}-init & $R_n$ gates might need to be decomposed \cite{ross2014optimal}.\\
      \hline
      \texttt{add\_arith} & $20n - 10$ & $22n$ & $0$ & $n-1$ \ket{0}-init & See \cite{Vedral_1996}.\\
      \hline
      \texttt{add\_qft} & 0 & $6\left(2n^2 - 2n + \lfloor \frac{n}{2} \rfloor\right)$&%
                                                                                       \begin{tabular}{@{}c@{}}
                                                                                         $2\left( n^2 + n \right)$ \texttt{$H$} \\
                                                                                         $4\left( n^2 - n \right)$ \texttt{$T$} \\
                                                                                         $3\frac{n^2 - n}{2}$ \texttt{$R_n$}
                                                                                       \end{tabular} & 1 \ket{0}-init & See \texttt{QFT} note on $R_n$. \cref{fig:modified-drapper-adder}.\\
      \hline
      \texttt{sub\_qft} & 0 & $6\left(2n^2 - 2n + \lfloor \frac{n}{2} \rfloor\right)$&%
                                                                                       \begin{tabular}{@{}c@{}}
                                                                                         $2\left( n^2 + n \right)$ \texttt{$H$} \\
                                                                                         $4\left( n^2 - n \right)$ \texttt{$T$} \\
                                                                                         $3\frac{n^2 - n}{2}$ \texttt{$R_n$} \\
                                                                                         $2n$ \texttt{$X$}
                                                                                       \end{tabular} & 1 \ket{0}-init & See \texttt{QFT} note on $R_n$. \cref{fig:sub-gate-implementation-from-add}.\\
      \hline
      \texttt{CARRY} & $2(n - 1)$ & $2 + \left[ 0, n-1 \right]$ & $2n + \left[ 0, n-1 \right]$ \texttt{$X$} & $n-1$ borrowed & See \cite{1611.07995v2}.\\
      \hline
      \texttt{$n$-contr. \texttt{CNOT}} & $4n$ & $0$ & $0$ & $n$ borrowed & See \cite{multi.controll.toffoli}.\\
      \hline
      \texttt{eq} & $4n$ & $0$ & $2 \left[ 0, n \right]$ \texttt{$X$}& $n$ borrowed & \cref{fig:eq-implementation}.\\
      \hline
      \texttt{cmp}& $2\left( n-1 \right)$ & $2 + \left[ 0, n - 1\right]$ & $4n + \left[ 0, n-1 \right]$ \texttt{$X$} & $n-1$ borrowed & See \texttt{CARRY} and \cref{sec:cmp-gate}.\\
      \hline
      \texttt{A} & $2n$ & $4n$ &%
                                 \begin{tabular}{@{}c c@{}}
                                   $3n$ \texttt{$H$} & 
                                                       $3n$ \texttt{$S$} \\
                                   $2n$ \texttt{$T$} & 
                                                       $2n$ \texttt{$X$} \\
                                 \end{tabular} & 0 & See \cite[Fig. 4.3.]{2004-ahokas-graeme-robert-improved-algorithms-for-approximate-quantum-fourier-transforms-and-sparse-hamiltonian-simulations}.\\
      \hline
      \texttt{$e^{-i Z \otimes Z \otimes F t}$} & $8n$ & $24n$ &%
                                                                 \begin{tabular}{@{}c@{}}
                                                                   $36n$ \texttt{$P_h$} \\
                                                                   $8$ \texttt{$X$}
                                                                 \end{tabular}%
           & 0 & Adapted from \cite[Fig. 4.6]{2004-ahokas-graeme-robert-improved-algorithms-for-approximate-quantum-fourier-transforms-and-sparse-hamiltonian-simulations} \\
      \hline
      \texttt{$1$-sparse HS} & $10n$ & $28n$ &%
                                               \begin{tabular}{@{}c c@{}}
                                                 $3n$ \texttt{$H$} & $3n$ \texttt{$S$} \\
                                                 $2n$ \texttt{$T$} & $2n + 8$ \texttt{$X$} \\
                                                 $36n$ \texttt{$P_h$}
                                               \end{tabular} & 0 & Oracle implementation cost not included. $2$ calls to the oracle are required. \cref{fig:sparse-ham-ahokas-circuit}.\\
      \hline
      $M_1$ & $4 \left( n-1 \right)$ & $5 + 2\left[ 0, n-1 \right]$ & $10n + 2 + \left[ 0, n-1 \right]$ \texttt{$X$} &%
                                                                                                                       \begin{tabular}{@{}c@{}}
                                                                                                                         $1$ \ket{0}-init \\
                                                                                                                         $n-1$ borrowed
                                                                                                                       \end{tabular}%
           & \texttt{add} implementation cost not included. $2$ calls to \texttt{add} are required. \cref{fig:M1-implementation-quantum-circuit}.\\
      \hline
      $V_1$ & $2\left( n-1 \right)$ & $2 + \left[ 0, n-1 \right]$ & $4n + \left[ 0, n-1 \right]$ \texttt{$X$} & $n-1$ borrowed & \cref{fig:V1-implementation-quantum-circuit}.\\
      \hline 
      $S_1$ & $0$ & $0$ & $0$ & $0$ & \cref{eq:S1-transformation}.\\
      \hline
      $M_{-1}$ & $4 \left( n-1 \right)$ & $5 + 2\left[ 0, n-1 \right]$ & $10n + 2 + \left[ 0, n-1 \right]$ \texttt{$X$} &%
                                                                                                                          \begin{tabular}{@{}c@{}}
                                                                                                                            $1$ \ket{0}-init \\
                                                                                                                            $n-1$ borrowed
                                                                                                                          \end{tabular}%
           & \texttt{add} implementation cost not included. $2$ calls to \texttt{add} are required. \cref{fig:M-1-implementation-quantum-circuit}.\\
      \hline
      $V_{-1}$ & $2 \left( n-1 \right)$ & $2 + \left[ 0, n-1 \right]$ & $4n + \left[ 0, n-1 \right]$ \texttt{$X$} & $n-1$ borrowed & \cref{fig:V-1-implementation-quantum-circuit}.\\
      \hline
      $S_{-1}$ & $16n + 1$ & $0$ & $5 + 8 \left[ 0, n \right]$ \texttt{$X$} & $n$ borrowed & \cref{fig:S-1-implementation-quantum-circuit}.\\
      \hline
    \end{tabular}%
  }
  \caption{Precise gate count for the most important subroutines used in the quantum implementation of the wave equation solver. $n$ always reprensent the size of the input(s), except for the $n$-controlled \texttt{CNOT} where $n$ is the number of controls. When the number of gates depends on a generation-time value, the range of all the integer values possible is shown with square brackets. For example, $\left[ 0, n-1 \right]$ means that, depending on the generation-time value provided, the number of gates will be an integer between $0$ and $n-1$ included. \ket{0}-init ancillas represent the standard ancilla-type: qubits that are given in the state \ket{0} and should be returned in that exact same state. On the other side, borrowed ancillas can be given in any state and should be returned in the exact same state they were borrowed in.}\label{tab:gate_counts}
\end{table}

\begin{table}[h]
  \centering
  \resizebox{\linewidth}{!}{%
    \begin{tabular}{|l|c|c|c|c|m{2.1cm}|}
      \hline
      Unitary & Toffoli count & \texttt{CNOT} count & $1$-qubit gate count & \# ancillas & notes\\
      \hline
      \hline
      \texttt{$e^{-i H_1 t}$} & $22n - 12$ & $28n + 7 + 3 \left[ 0, n-1 \right]$ &
                                                                                   \begin{tabular}{@{}c@{}}
                                                                                     \begin{tabular}{@{}c c@{}}
                                                                                       $3n$ \texttt{$H$} & $3n$ \texttt{$S$} \\
                                                                                       $2n$ \texttt{$T$} & $36n$ \texttt{$P_h$} \\
                                                                                     \end{tabular} \\
                                                                                     $30n + 10 + 2\left[ 0, n-1 \right]$ \texttt{$X$}
                                                                                   \end{tabular} &
                                                                                                   \begin{tabular}{@{}c@{}}
                                                                                                     $1$ \ket{0}-init \\
                                                                                                     $n-1$ borrowed
                                                                                                   \end{tabular}
              & \texttt{add} implementation cost not included. $4$ calls to \texttt{add} are required.\\
      \hline
      \texttt{$e^{- i H_{-1} t}$} & $38n - 11$ & $28n + 7 + 3 \left[ 0, n-1 \right]$ &
                                                                                       \begin{tabular}{@{}c@{}}
                                                                                         \begin{tabular}{@{}c c@{}}
                                                                                           $3n$ \texttt{$H$} & $3n$ \texttt{$S$} \\
                                                                                           $2n$ \texttt{$T$} & $36n$ \texttt{$P_h$} \\
                                                                                         \end{tabular} \\
                                                                                         $30n + 15 + 10\left[ 0, n \right]$ \texttt{$X$}
                                                                                       \end{tabular} & \begin{tabular}{@{}c@{}}
                                                                                                         $1$ \ket{0}-init \\
                                                                                                         $n-1$ borrowed
                                                                                                       \end{tabular} & \texttt{add} implementation cost not included. $4$ calls to \texttt{add} are required.\\

      \hline
      \texttt{$e^{- i H t}$} & $82n - 35$ & $84n + 21 + 9 \left[ 0, n-1 \right]$ &
                                                                                   \begin{tabular}{@{}c@{}}
                                                                                     \begin{tabular}{@{}c c@{}}
                                                                                       $9n$ \texttt{$H$} & $9n$ \texttt{$S$} \\
                                                                                       $6n$ \texttt{$T$} & $108n$ \texttt{$P_h$} \\
                                                                                     \end{tabular} \\
                                                                                     $90n + 35 + 14\left[ 0, n \right]$ \texttt{$X$}
                                                                                   \end{tabular} & \begin{tabular}{@{}c@{}}
                                                                                                     $1$ \ket{0}-init \\
                                                                                                     $n-1$ borrowed
                                                                                                   \end{tabular} & \texttt{add} implementation cost not included. $12$ calls to \texttt{add} are required.\\
      \hline
      
    \end{tabular}
  }
  \caption{Number of gates and ancillas needed to simulate the easy-to-simulate Hamiltonians $H_1$ and $H_{-1}$ that are part of the decomposition of $H$ as well as $e^{-iHt}$. It is important to realise that the gate counts for $e^{-iHt}$ are only valid up to a given $t$ or $\epsilon$ (once one is fixed, the value of the other can be computed). In order to make the gate count generic for any $t$ and $\epsilon$, the number of repetitions should be computed (see $n$ in \cref{eq:lie-trotter-suzuki-product-formula-timestep}). Note that some of the $\left[ 0, n-1 \right]$ ranges have been simplified to $\left[ 0, n \right]$ for conciseness.}\label{tab:HS_gate_counts}
\end{table}

\begin{table}[h]
  \centering
  \resizebox{\linewidth}{!}{%
    \begin{tabular}{|l|c|c|c|c|}
      \hline
      Adder used & Toffoli count & \texttt{CNOT} count & $1$-qubit gate count & \# ancillas\\
      \hline
      \hline
      \texttt{add\_qft} & $82n - 35$ & $144n^2 - 60n$ &
                                                        \begin{tabular}{@{}c@{}}
                                                          \begin{tabular}{@{}c c@{}}
                                                            $24n^2 + 25n$ \texttt{$H$} & $9n$ \texttt{$S$} \\
                                                            $48n^2 - 42n$ \texttt{$T$} & $108n$ \texttt{$P_h$} \\
                                                          \end{tabular} \\
                                                          $18n^2 - 18n$ \texttt{$R_n$} \\
                                                          $114n + 35 + 14\left[ 0, n \right]$ \texttt{$X$}
                                                        \end{tabular} &
                                                                        \begin{tabular}{@{}c@{}}
                                                                          $2$ \ket{0}-init \\
                                                                          $n-1$ borrowed
                                                                        \end{tabular}\\
      \hline
      \texttt{add\_arith} & $222n - 175$ & $348n + 21 + 9 \left[ 0, n-1 \right]$ &
                                                                                   \begin{tabular}{@{}c@{}}
                                                                                     \begin{tabular}{@{}c c@{}}
                                                                                       $9n$ \texttt{$H$} & $9n$ \texttt{$S$} \\
                                                                                       $6n$ \texttt{$T$} & $108n$ \texttt{$P_h$} \\
                                                                                     \end{tabular} \\
                                                                                     $90n + 35 + 14\left[ 0, n \right]$ \texttt{$X$}
                                                                                   \end{tabular} & \begin{tabular}{@{}c@{}}
                                                                                                     $n$ \ket{0}-init \\
                                                                                                     $n-1$ borrowed
                                                                                                   \end{tabular}\\
      \hline
    \end{tabular}
  }
  \caption{Number of gates and ancillas needed to simulate the Hamiltonian used to solve the $1$-dimensional wave equation depending on the adder implementation used. It is important to realise that the gate counts for $e^{-iHt}$ reported in this table are only valid up to a given $t$ or $\epsilon$ (once one is fixed, the value of the other can be computed). In order to make the gate count generic for any $t$ and $\epsilon$, a number of repetitions $r$ should be computed (named $n$ in \cref{eq:lie-trotter-suzuki-product-formula-timestep} and studied in~\cite[arXiv: Appendix F]{1711.10980v1} and~\cite{1912.08854v1}). Note that the gate counts have been simplified by removing negligible terms when possible.}\label{tab:final_HS_gate_counts}
\end{table}

\subsection{Impact of the precision requirements}
\label{sec:impact-prec-requ}

The gate counts presented in~\cref{tab:gate_counts},~\cref{tab:HS_gate_counts} and~\cref{tab:final_HS_gate_counts} are only valid when the precision of the solver is not accounted for. When the solver precision matters, an additional step that consists is splitting the Hamiltonian Simulation into $r$ steps needs to be performed as noted in~\cite[arXiv: Appendix F]{1711.10980v1}.

Several bounds exist to determine a $r\in\mathbb{N}^*$ that will analytically ensure that the
maximum allowable error $\epsilon$ is not exceeded. The definition of such bounds can be found
in~\cite[arXiv: Appendix F]{1711.10980v1} and~\cite{1912.08854v1}.

The first bound has been devised by analytically bounding the error of simulation due to the
Trotter-Suzuki formula approximation by $\epsilon_0$
\begin{equation}
  \label{eq:error-to-bound}
  \left\vert \left\vert \exp\left( -it \sum_{j=0}^{m-1} H_j \right) - \left[ S_{2k}\left( -\frac{it}{r} \right) \right]^r\right\vert \right\vert \leqslant \epsilon_0
\end{equation}
and then let $\epsilon_0 \leqslant \epsilon$ for a given desired precision $\epsilon$.
If we let $\Lambda = \max_j \vert\vert H_j\vert\vert$ and
\begin{equation}
  \label{eq:tau-factor-definition}
  \tau = 2m5^{k-1}\Lambda\vert t \vert  
\end{equation}
then 
\begin{equation}
  \label{eq:analytic_bound}
  r^{ana}_{2k} = \left\lceil \max\left\{ \tau, \sqrt[2k]{\frac{e\tau^{2k+1}}{3\epsilon}} \right\} \right\rceil.
\end{equation}
This bound is called the \emph{analytic bound}.

A better bound called the \emph{minimised bound} can be devised by searching
for the smallest possible $r$ that satisfies the conditions detailed in
\cite[Propositions F.3 and F.4]{1711.10980v1}. This bound is rewritten in Equation
\eqref{eq:minimised-bound-definition}.

\begin{equation}
  \label{eq:minimised-bound-definition}
  r^{min}_{2k} = \min \left\{ r \in \mathbb{N}^* : \frac{\tau^{2k+1}}{3r^{2k}} \exp\left( \frac{\tau}{r} \right) < \epsilon \right\}
\end{equation}

Another bound involving nested commutators of the $H_i$ is described in~\cite{1912.08854v1} and gives
\begin{equation}
  \label{eq:r_nested_commutators}
  r_{2k}^{comm} \in \bigO{\frac{\alpha_{comm}^{\frac{1}{2k}} t^{1 + \frac{1}{2k}}}{\epsilon^{\frac{1}{2k}}}}
\end{equation}
where $k$ is the order of the product-formula used, $t$ the time of simulation, $\epsilon$ the error and
\begin{equation}
  \label{eq:alpha_commutator}
  \alpha_{comm} = \sum_{i_0, i_1, \dots{}, i_p = 0}^{m-1} \left\vert \left\vert \left[ H_{i_p}, \dots{} \left[ H_{i_1}, H_{i_0} \right] \right] \right\vert \right\vert.
\end{equation}

Once the value of $r$ has been computed, the quantum circuit simulating the matrix $H$ for a time $\frac{t}{r}$ should be repeated $r$ times. This adds a factor of $r$ in front of all the gate counts computed in~\cref{tab:gate_counts}, \cref{tab:HS_gate_counts} and~\cref{tab:final_HS_gate_counts}.

\subsection{Impact of error-correction}
\label{sec:impact-error-corr}

When error-correction is studied, two gates are particularly important: $T$ and Toffoli gates. The $T$ gate has a prohibitive cost when compared to the Clifford quantum gates and implementing a Toffoli gate requires $7$ of such $T$ gates as noted in~\cite{Fowler2012} and~\cite[Fig. 1]{shende2008cnotcost}.

\cref{tab:T_gate_estimations} summarise the cost of the non Clifford quantum gates used in the implementation of the $1$-dimensional wave equation solver. The rotation gates need to be approximated. One solution to approximate the $R_n$ and $P_h$ gates is given in~\cite{ross2014optimal}. In order to obtain practical results as opposed to theoretical ones, we chose to use the number computed in~\cite[Table 1]{Kim2018}.

The final $T$-count is summarised in~\cref{fig:T_count}. From~\cref{fig:T_count_plot} it is clear that the \texttt{add\_arith} implementation is more efficient than the \texttt{add\_qft} one. 

\begin{table}[h]
  \centering
  \begin{tabular}{|l|c|m{5cm}|}
    \hline
    Gate & $T$ count & Notes\\
    \hline
    $T$ & 1 & \\
    \hline
    $S$ & 2 & \\
    \hline
    \texttt{CCNOT} & $7$ & See~\cite{Fowler2012}.\\
    \hline
    $P_h$ & 379 & $\epsilon = 10^{-15}$, approximated from~\cite{Kim2018}. \\
    \hline
    $R_n$ & 379 & $\epsilon = 10^{-15}$, approximated from~\cite{Kim2018}. \\
    \hline
  \end{tabular}
  \caption{$T$-gate cost of the non Clifford quantum gates used in the wave equation solver implementation.}\label{tab:T_gate_estimations}
\end{table}

\begin{figure}[h]
  \centering
  \subfigure[%
  {Number of $T$-gates needed to simulate the Hamiltonian used to solve the $1$-dimensional wave equation depending on the adder implementation used. Based on~\cref{tab:final_HS_gate_counts} and~\cref{tab:T_gate_estimations}.}%
  ]{%
    \includegraphics[width=.35\linewidth]{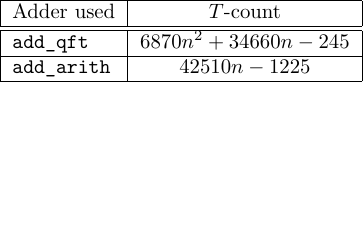}%
    \label{tab:T_count_table}
  }\hspace{.03\linewidth}
  \subfigure[%
  {Plot of the $T$-count devised in \cref{tab:T_count_table} for the two different adder implementations.}%
  ]{%
    \includegraphics[width=.6\linewidth]{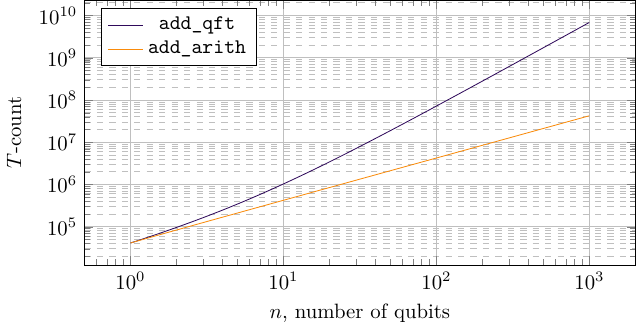}%
    \label{fig:T_count_plot}
  }  
  \caption{Analysis of the $T$-count of the $1$-dimensional wave equation solver quantum implementation with respect to the adder implementation used.}\label{fig:T_count}
\end{figure}

\end{document}